\newcommand{\revised}[1]{{\color{black} #1}}
\newcommand{\M}{\mathcal{M}}
\newcommand{\A}{A}
\newcommand{\N}{\mathcal{N}}
\renewcommand{\C}{\mathbf{C}}
\renewcommand{\phi}{\varphi}
\DeclareMathOperator*{\argmin}{arg\,min}
\definecolor{mygreen}{RGB}{28,172,0} 
\definecolor{mylilas}{RGB}{170,55,241}
\newcommand{\name}{\textsc{ZoomOut}}
\begin{document}
\title{\name: Spectral Upsampling for Efficient Shape Correspondence}

\author{Simone Melzi}
\authornote{Equal contribution.}
\affiliation{
	\institution{University of Verona}
}
\email{simone.melzi@univr.it}

\author{Jing Ren}
\authornotemark[1]
\affiliation{
	\institution{KAUST}
}
\email{jing.ren@kaust.edu.sa}

\author{Emanuele Rodol\`a}
\affiliation{
	\institution{Sapienza University of Rome}
}
\email{rodola@di.uniroma1.it}

\author{Abhishek Sharma}
\affiliation{
	\institution{LIX, \'Ecole Polytechnique}
}
\email{kein.iitian@gmail.com}

\author{Peter Wonka}
\affiliation{
	\institution{KAUST}
}
\email{pwonka@gmail.com}

\author{Maks Ovsjanikov}
\affiliation{
	\institution{LIX, \'Ecole Polytechnique}
}
\email{maks@lix.polytechnique.fr}
\renewcommand\shortauthors{Melzi. et al}

\begin{abstract}
  We present a simple and efficient method for refining maps or correspondences by iterative upsampling in the spectral domain that can be implemented in a few lines of code.  Our main observation is that high quality maps can be obtained
  even if the input correspondences are noisy or are encoded by a small number of coefficients in a spectral
  basis. We show how this approach can be used in conjunction with
  existing initialization techniques across a range of application scenarios, including symmetry detection, map refinement across complete shapes, non-rigid partial shape matching and function transfer. In each application we demonstrate an improvement with respect to both the quality of the results and the computational speed compared to the best competing methods, with up to two orders of magnitude speed-up in some applications. We also demonstrate that our method is both robust to noisy input and is scalable with respect to shape complexity. Finally, we present a theoretical justification for our approach, shedding light on structural properties of functional maps.
\end{abstract}

\begin{CCSXML}
<ccs2012>
<concept>
<concept_id>10010147.10010371.10010396.10010402</concept_id>
<concept_desc>Computing methodologies~Shape analysis</concept_desc>
<concept_significance>500</concept_significance>
</concept>
</ccs2012>
\end{CCSXML}

\ccsdesc[500]{Computing methodologies~Shape analysis}

\keywords{Shape Matching, Spectral Methods, Functional Maps}

\maketitle

\section{Introduction}
\label{sec:introduction}

Shape matching is a task that occurs in countless applications in computer graphics, including shape interpolation
\cite{kilian2007geometric} and statistical shape analysis \cite{FAUST}, to name a few.

An elegant approach to non-rigid shape correspondence is provided by \emph{spectral techniques}, which are broadly
founded on the observation that near-isometric shape matching can be formulated as an alignment problem in certain
higher-dimensional embedding spaces \cite{jain2006,ovsjanikov2012functional,maron2016point,biasotti2016recent}.
%
%
Despite significant recent advances and their wide practical applicability, however, spectral methods can both be
computationally expensive and unstable with increased dimensionality of the spectral embedding. On the other hand, a
reduced dimensionality results in very approximate maps, losing medium and high-frequency details and leading to
significant artifacts in applications.

In this paper, we show that a higher resolution map can be recovered from a lower resolution one through a remarkably
simple and efficient iterative spectral up-sampling technique, which consists of the following two basic steps:
\begin{enumerate}
\item Convert a $k \times k$-size functional map to a pointwise map.
\item Convert the pointwise map to a $k+1\times k+1$ functional map.
\end{enumerate}

\begin{figure}[t]
\vspace{-3.2cm}
  \centering
  \begin{overpic}
  [trim=0cm 0cm 0cm 0cm,clip,width=\linewidth]{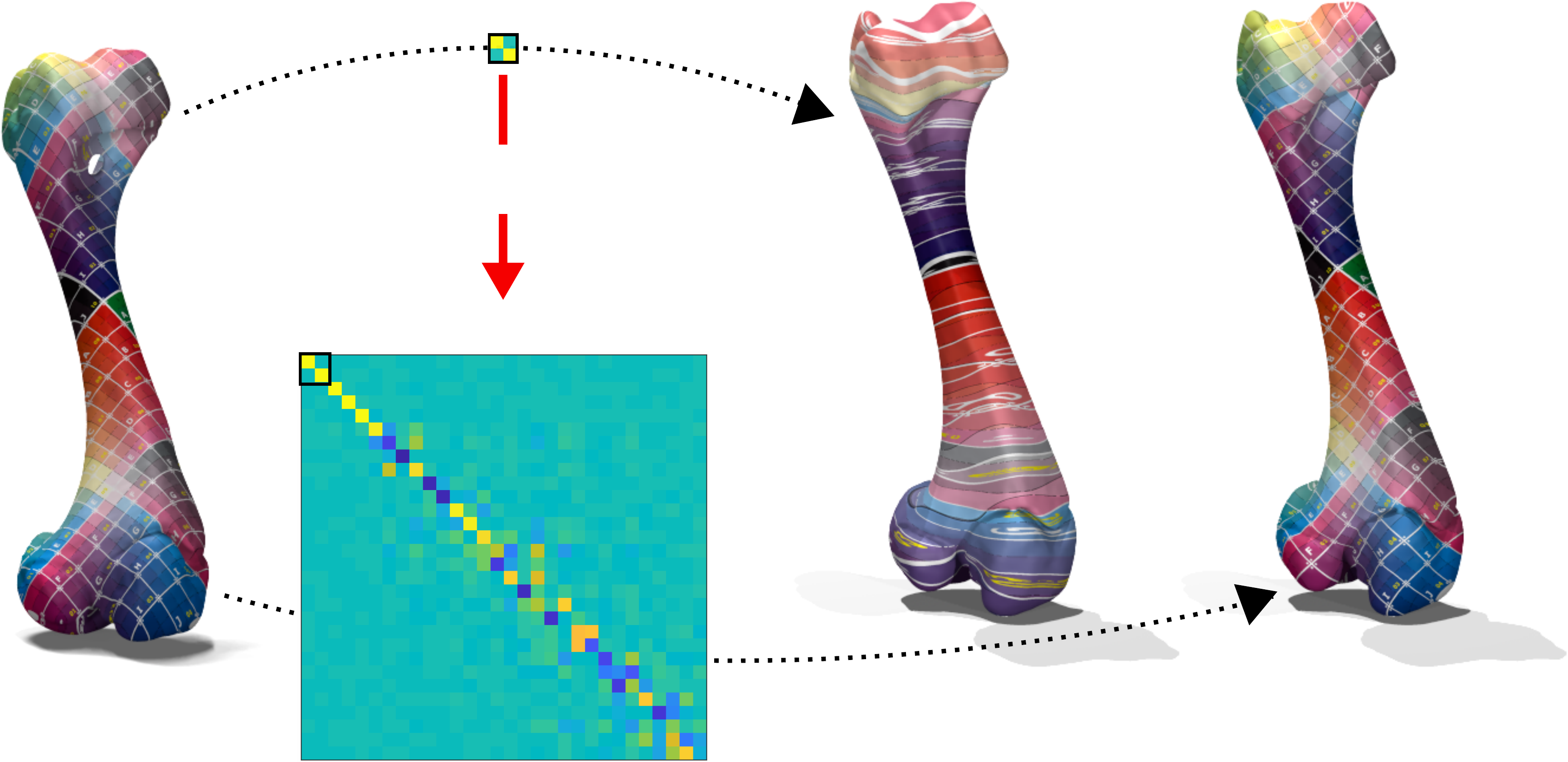}
    \put(22.1,47){\footnotesize Input: $2\times 2$ map}
    \put(19.5,27.2){\footnotesize Output: Refined map}
    \put(60.5,3.6){\footnotesize Input}
    \put(54.6,0){\footnotesize correspondence}
    \put(85.5,3.6){\footnotesize Output}
    \put(80.5,0){\footnotesize correspondence}
    \put(24.2,36){\name}
  \end{overpic}
 \caption{Given a small functional map, here of size $2 \times 2$ which corresponds to a very noisy point-to-point correspondence (middle right) our method can efficiently recover both a high resolution functional and an accurate dense point-to-point map (right), both visualized via texture transfer from the source shape (left).\vspace{-3mm}\label{fig:example}}
 \vspace{-0.1cm}
\end{figure}

Our main observation is that by iterating the two steps above,
starting with an approximate initial map, encoded using a small number
of spectral coefficients (as few as 2--15), we can obtain an accurate
correspondence at very little computational cost.

We further show that our refinement technique can be combined with standard map initialization methods to obtain
state-of-the-art results on a wide range of problems, including intrinsic symmetry detection, isometric shape matching,
non-rigid partial correspondence and function transfer among others. Our method is robust to significant changes in
shape sampling density, is easily scalable to meshes containing tens or even hundreds of thousands of vertices and is
significantly (up to 100-500 times in certain cases) faster than existing state-of-the-art map refinement approaches,
while producing comparable or even superior results. For example, Figure~\ref{fig:example} shows a result obtained
with our method, where starting from an initial $2\times 2$ functional map, we recover a high
resolution functional and an accurate pointwise correspondence.


\paragraph{Contributions.} To summarize:
\begin{enumerate}
\item We introduce a very simple map refinement method capable of improving upon the state of the art in a diverse set of shape correspondence problems; for each problem we can achieve the same or better quality at a fraction of the cost compared to the current top performing methods.
\item We demonstrate how higher-frequency information can be extracted from low-frequency spectral map representations.
\item We introduce a novel variational optimization problem and develop a theoretical justification of our method, shedding light on structural properties of functional maps.
\end{enumerate}


\section{Related Work}
\label{sec:related}
Shape matching is a very well-studied area of computer graphics. Below we review the methods
most closely related to ours, concentrating on spectral techniques for finding
correspondences between non-rigid shapes. We refer the interested readers to recent surveys
including \cite{van2011survey,tam2013registration,biasotti2016recent} for a more in-depth treatment of the area.

\paragraph{Point-based Spectral Methods.} Early spectral methods for shape correspondence were based on directly optimizing pointwise maps between spectral shape embeddings based on either adjacency or Laplacian matrices of graphs and triangle meshes
\cite{umeyama1988,scott1991,jain2006,jain2007,mateus08,ovsjanikov2010}.
Such approaches suffer from the requirement of a good initialization, and rely on restricting assumptions about the type of transformation relating the shapes. An initialization algorithm with optimality guarantees, although limited to few tens of points, was introduced in~\cite{maron2016point} and later extended to deal with intrinsic symmetries in~\cite{dym2017exact}.
Spectral quantities (namely, sequences of Laplacian eigenfunctions) have also been used to define pointwise descriptors, and employed within variants of the quadratic assignment problem in~\cite{dubrovina2010matching,dubrovina2011approximately}. These approaches have been recently generalized by spectral generalized multidimensional scaling~\cite{aflalo2016spectral}, which explicitly formulates minimum-distortion shape correspondence in the spectral domain.

\paragraph{Functional Maps.} Our approach fits within the functional map framework, which was
originally introduced in \cite{ovsjanikov2012functional} for solving non-rigid shape matching problems, and extended significantly in follow-up works, including
\cite{kovnatsky2013coupled,aflalo2013spectral,rodola2017partial,ezuz2017deblurring} among others
(see \cite{ovsjanikov2017computing} for an overview). These methods assume as input a set of corresponding functions, which can be derived from pointwise landmarks, dense descriptor fields, or from 
 region correspondences. They then estimate a functional map matrix that allows to transfer
real-valued functions across the two shapes, which is then converted to a pointwise map.

Although the first step reduces to the solution of a linear system of equations, this last step can be difficult and error prone \cite{rodola-vmv15,ezuz2017deblurring}. As a result, several strong regularizers have been proposed to promote certain desirable properties: see  \cite{rodola2017partial,nogneng17,huang2017adjoint,litany2017fully,burghard2017embedding,wang2018vector}. More recently, several other constraints on functional maps have been proposed to promote continuity of the pointwise correspondence \cite{poulenard18}, map curves defined on shapes \cite{gehre2018interactive}, extract more information from given descriptor constraints \cite{wang2018kernel}, and for incorporating orientation information into the map inference pipeline \cite{ren2018continuous}.

\revised{
In a concurrent work, \cite{Shoham2019hierarchicalFmap} also compute hierarchical functional maps by building explicit hierarchies in the spatial domain using subdivision surfaces. Unlike this work, our method operates purely in the spectral domain, and does not require computing additional shape hierarchies.
}

\paragraph{High-frequency Recovery.} Several approaches have also observed that high-frequency
information can be recovered even if the input functional map is small or noisy. This includes both optimizing an input map using vector field flow \cite{corman2015continuous}, recovering precise (vertex-to-point) maps \cite{ezuz2017deblurring} from low frequency functional ones, and using pointwise products to extend the space of functions that can be transferred \cite{nogneng18}.

\paragraph{Iterative Map Refinement.} We also note other commonly-used relaxations for matching problems based on optimal transport, e.g. \cite{solomon2016entropic,mandad2017variance}, which are often solved through iterative refinement. Other techniques that exploit a similar formalism for solving optimal assignment include the Product Manifold Filter and its variants~\cite{vestner2017product,vestner2017efficient}. Map refinement has also been considered in the original functional maps approach \cite{ovsjanikov2012functional}
where Iterative Closest Point in the spectral embedding has been used to improve input functional maps. Finally, in the context of shape collections
\cite{wang2013,huang2014functional,wang2013exact}, cycle-consistency constraints have been used to iteratively improve input map quality. We further discuss methods most closely-related to ours in Section \ref{sec:relation_others} below.

Although these techniques can be very effective for obtaining high-quality correspondences, methods based purely on optimization in the spatial domain can quickly become prohibitively expensive even for moderate sampling density. On the other hand, spectral techniques can provide accurate solutions for low-frequency matching, but require significant effort to recover a high-quality dense pointwise correspondence; further, such approaches are often formulated as difficult optimization problems and suffer from instabilities for large embedding dimensions.



\begin{figure}[t]
  \centering
  \begin{overpic}
  [trim=4.5cm 3cm 7cm 0cm,clip,width=\linewidth, grid=false]{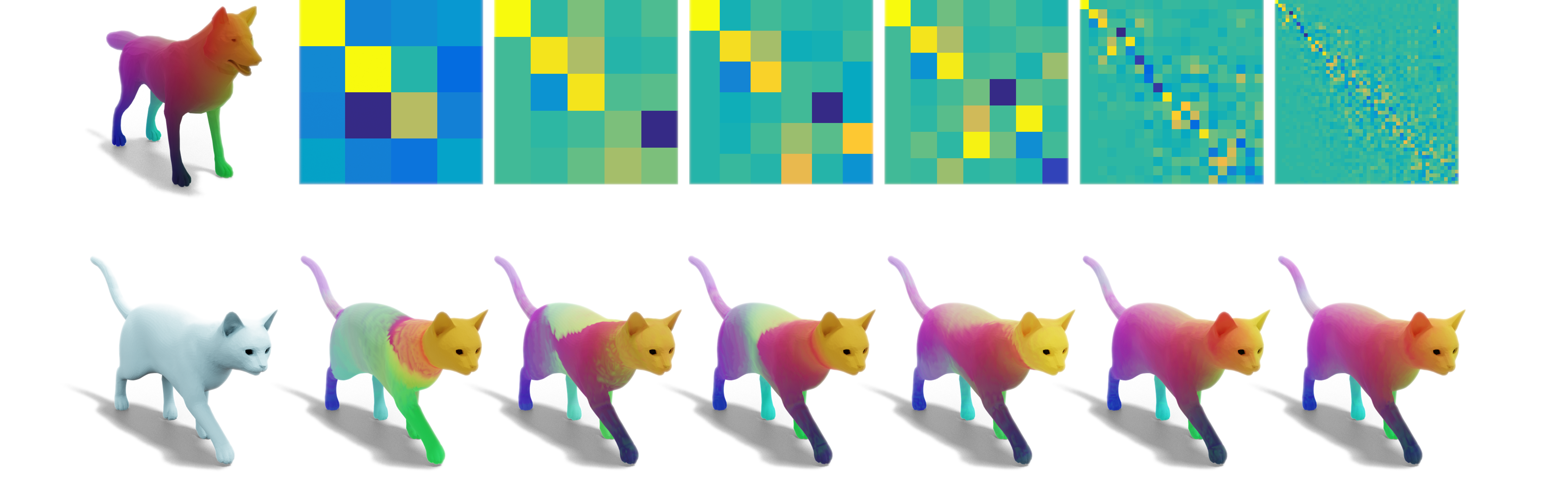}
  \put(4.9,36){\footnotesize{Source}}
  \put(4,33.5){\scriptsize{ $n = 4.3K $}}
  
  \put(5,17){\footnotesize{Target}}
  \put(4,14.5){\scriptsize{ $n = 10K$}}
  
  \put(18,35.75){\footnotesize{Ini: $4\times4$}}
  \put(31.7,36.5){\footnotesize{zoomOut}}
  \put(32.8,34){\footnotesize{to $5\times 5$}}
  
  \put(48,35.75){\footnotesize{$6\times 6$}}
  \put(62,35.75){\footnotesize{$7\times 7$}}
  \put(74,35.75){\footnotesize{$20\times 20$}}
  
  \put(86,36.5){\footnotesize{zoomOut}}
  \put(85.8,34){\footnotesize{to $50\times 50$}}
  \end{overpic}
  \vspace{-0.7cm}
  \caption{\name ~example. Starting with a noisy functional map of size $4 \times 4$ between the two
  shapes we progressively upsample it using our two-step procedure and visualize the corresponding
  point-to-point map at each iteration via color transfer. Note that as the size of the functional
  map grows, the map becomes both more smooth and more semantically accurate. We denote the number of vertices by $n$.
\label{fig:eg:cat_wolf}
\vspace{-0.25cm}}
\end{figure}

\section{Background \& Notation}
\label{sec:background}
%
In this section we introduce the main background notions and notation
used throughout the rest of the paper.

Given a pair of shapes $\M$ and $\N$, typically represented as
triangle meshes, we associate to them the positive semi-definite
Laplacian matrices $L_\M, L_\N$, discretized via the standard
cotangent weight scheme \cite{pinkall93}, so that
$L_\M = \A_\M^{-1} W_\M$, where $\A_\M$ is the diagonal matrix of
lumped area elements and $W_\M$ is the cotangent weight matrix, with
the appropriate choice of sign to ensure positive
semi-definiteness. We make use of the basis consisting of the first
$k_\M$ eigenfunctions of the Laplacian matrix, and encode it in a
matrix
$\Phi^{k_\M}_{\M} = [ \phi_{1}^{\M}, \phi_{2}^{\M}, \ldots ,
\phi_{k_{\M}}^{\M} ]$ having the eigenfunctions as its columns.
%
%
We define the {\em spectral embedding} of $\M$ as the $k_\M$-dimensional point set $\big\{ \big(\phi_{1}^{\M}(x), \ldots , \phi_{k_{\M}}^{\M}(x)\big) \, | \, x\in\M \big\}$.

Given a point-to-point map $T: \M \to \N $, we denote by $\Pi$ its matrix
representation, s.t. $\Pi(i,j) = 1$ if $T(i) = j$ and $0$ otherwise, \revised{where $i$ and $j$ are vertex indices on shape $\M$ and $\N$, respectively. Note that the matrix $\Pi$ is an equivalent matrix representation of any pointwise map $T$ without extra
  assumptions, such as bijectivity.} The corresponding
\emph{functional map} $~\mathbf{C}$ is a linear transformation taking functions on $\N$ to functions on $\M$; in matrix notation, it is given  by the projection of $\Pi$ onto the corresponding functional basis: 
\begin{align}
\label{eq:fmap_define}
\C = \Phi^{+}_{\M} \Pi \Phi_{\N}\,,
\end{align}
where $^{+}$ denotes the Moore-Penrose pseudo-inverse. When the eigenfunctions are orthonormal with respect to the area-weighted inner product, so that $\Phi_\M^\top \A_\M \Phi_\M = Id$, then Eq.~\eqref{eq:fmap_define} can be written as: $\C = \Phi_\M^\top \A_\M \Pi \Phi_\N$. Note that $\C$ is a matrix of size $k_\M \times k_\N$, independent of the number of
vertices on the two shapes. 

A typical pipeline for computing a correspondence using the functional map representation proceeds as follows \cite{ovsjanikov2017computing}: 1) Compute a moderately-sized basis (60-200 basis functions) on each shape; 2) Optimize for a functional map $\C_{\text{opt}}$ by minimizing an energy, based on preservation of descriptor functions or landmark correspondences and regularization, such as commutativity with the Laplacian operators; 3) Convert $\C_{\text{opt}}$ to a
point-to-point map. The complexity of this pipeline directly depends on the size of the chosen basis, and thus the
dimensionality of the spectral embedding. 
 Smaller bases allow more stable and efficient functional map recovery but result in approximate pointwise correspondences, while larger functional maps
can be more accurate but are also more difficult to optimize for and require stronger priors.

Our main goal, therefore, is to show that accurate pointwise correspondences can be obtained even in
the presence of only small, or approximate functional maps.



\section{\name: Iterative Spectral Upsampling}
\label{sec:method}

As input we assume to be given either a small functional map $\C_0$  or a point-to-point correspondence $T: \M \rightarrow \N$; both may be affected by noise. We will discuss
the role and influence of the input map in detail in the following sections. If it is a point-to-point map, we first convert it to a functional one via
Eq.~\eqref{eq:fmap_define}. For simplicity, we first state our method and then provide its theoretical derivation from a variational optimization problem in Section \ref{sec:theory}.

Given an input  $k_\M \times k_\N$ functional map $\C_0$ our goal is to extend it to a new map $\C_1$ of size
$(k_{\M}+1)\times( k_{\N}+1)$ without any additional information. We do so by a simple two-step
procedure:
\begin{enumerate}
\item Compute a point-to-point map $T$ via Eq.~\eqref{eq:p2p_recovery}, and encode it as a matrix $\Pi$.
\item Set $\C_{1} = (\Phi^{k_\M+1}_\M)^\top \A_\M \Pi \, \Phi^{k_\N+1}_\N.$
\label{eq:iteration}
\end{enumerate} 

\begin{figure}[!t]
\centering
\includegraphics[width=1\linewidth]{./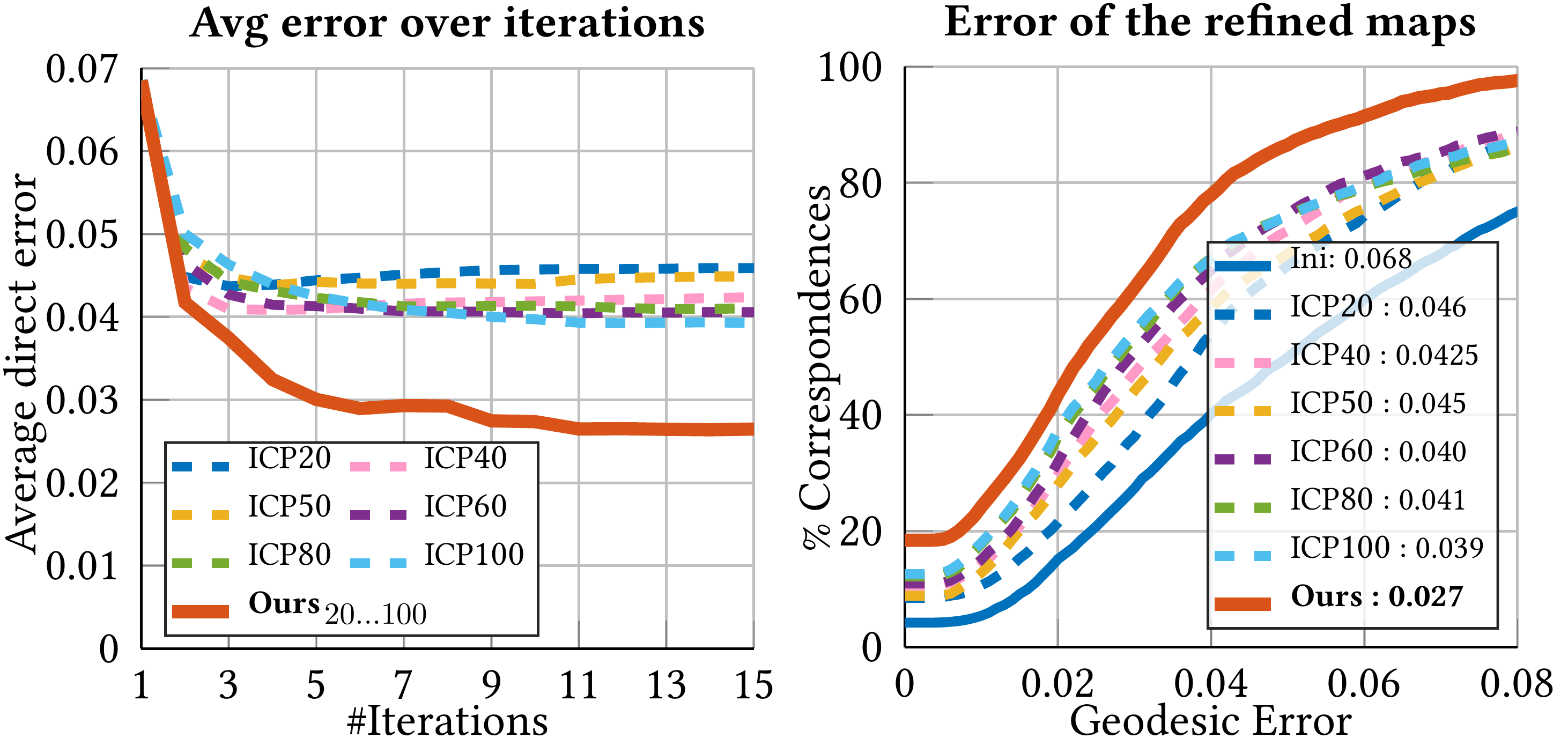}
\begin{overpic}[trim=0cm 0cm 0cm 0cm,clip,width=0.9\linewidth,grid=false]{./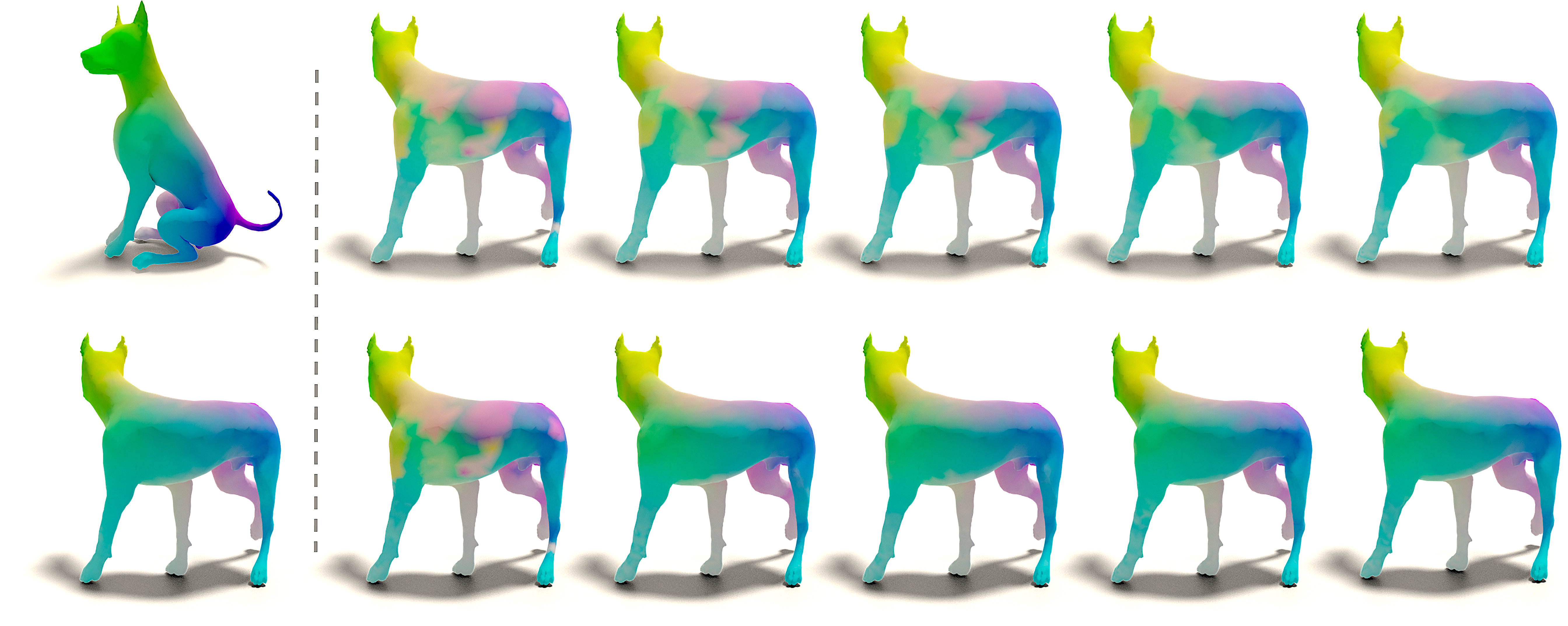}
\put(0,39){\scriptsize Source}
\put(0,20){\scriptsize Target}

\put(24,-0.5){\scriptsize{\#iter = 1}}
\put(40,-0.5){\scriptsize{\#iter = 2}}
\put(56,-0.5){\scriptsize{\#iter = 3}}
\put(72,-0.5){\scriptsize{\#iter = 5}}
\put(88,-0.5){\scriptsize{\#iter = 10}}
\put(101,37){\rotatebox{-90}{\footnotesize{ICP100}}}
\put(101,20){\rotatebox{-90}{\footnotesize{\textbf{Ours$_{20...100}$}}}}
\end{overpic}
\vspace{-0.2cm}
\caption{\label{fig:res:eg:tosca_iter}Comparison of map quality during ICP iterations in different (fixed) dimensions vs. \name\ from 20 to 100 with step 5. 
 Top row: average error of pointwise maps during refinement and the error summary of the refined maps after 15 iterations. Note that regardless of dimension, ICP gets trapped in a local minimum. Bottom row: visualization of the refined maps at iteration 1, 2, 3, 5, and 10 of ICP with dimension 100 vs. our method. 
}
\vspace{-0.2cm}
\end{figure}

We then iterate this procedure to obtain progressively larger functional maps
$\C_{0}, \C_{1}, \C_{2}, ..., \C_{n}$ until some sufficiently large $n$. As we demonstrate below,
this remarkably simple procedure, which can be implemented in only a few lines of code (see Appendix B), can result
in very accurate functional and pointwise maps even given very small and possibly noisy input.
To compute a pointwise map from a given $\C$ in step (1), we solve the following problem:
\begin{align}
\label{eq:p2p_recovery}
T(p) = \argmin_{q} \|\C (\Phi_\N(q))^\top - (\Phi_\M(p))^\top\|_2 , ~\forall ~p \in \M
\end{align}
where $\Phi_\M(p)$ denotes the $p^{\text{th}}$ row of the matrix of eigenvectors $\Phi_\M$. This
procedure gives a point-to-point map $T: \M \rightarrow \N$, and can be implemented via a nearest-neighbor query in $k_\M$-dimensional space. It is also nearly identical,
up to change in direction, to the pointwise map recovery described in the original functional
maps article \cite[Section 6.2]{ovsjanikov2012functional} but differs from other recovery steps, introduced, e.g., in \cite{ezuz2017deblurring} as we discuss below. 

Figure \ref{fig:eg:cat_wolf} shows an example of \name\ on a pair of animal shapes from the TOSCA
dataset \cite{TOSCA}. Starting with a $4\times 4$ functional map, we show both the functional
and point-to-point (visualized via color transfer) maps throughout our upsampling iterations. Note
how the pointwise map becomes both more smooth and accurate as the functional map grows.

We use the term ``upsampling'' in the description of our method to highlight the fact that at every iteration \name\ introduces additional frequencies and thus intuitively adds samples \emph{in the spectral domain} for representing a map.



\subsection{Map Initialization}


We initialize our pipeline by optimizing for a $k_\M \times k_\N$ functional map $\C_0$ using an existing approach; we 
tested recent techniques, including
\cite{rodola2017partial,ren2018continuous} among others, across different settings
described in detail in Section \ref{sec:results}. 

The key parameter for the initialization is the size of the functional map, and in most settings,
we set $k_\M = k_\N = k$ for some small $k$. This value ranges between $4$ and $20$ in all of our
experiments, and allows us to obtain high quality maps by upsampling $\C_0$ to sizes up to $200\times 200$
depending on the scenario. 
 We have observed that the key requirement for the input map $\C_0$ is that although it
can be noisy and approximate, it should generally disambiguate between the possible symmetries
exhibited by the shape. Thus, for example, if $4$ basis functions are sufficient to distinguish left
and right on the animal models shown in Figure \ref{fig:eg:cat_wolf}, then with a functional map of
this size our method can produce an accurate final correspondence. Perhaps the most difficult case
we have encountered is in disambiguating front and back in human shapes which requires approximately
15 basis functions. This is still significantly smaller than typical values in existing functional
map estimation pipelines, which are based on at least 60 to 100 basis functions to compute accurate
maps.

\begin{figure}[!t]
\centering
\hspace{-0.37cm}
%
%
\definecolor{mycolor1}{rgb}{0.00000,0.44700,0.74100}%
\definecolor{mycolor2}{rgb}{0.85000,0.32500,0.09800}%
\definecolor{mycolor3}{rgb}{0.92900,0.69400,0.12500}%
\definecolor{mycolor4}{rgb}{0.49400,0.18400,0.55600}%
\definecolor{mycolor5}{rgb}{0.46600,0.67400,0.18800}%
\pgfplotsset{scaled y ticks=false}
\pgfplotsset{
compat=1.11,
legend image code/.code={
\draw[mark repeat=2,mark phase=2]
plot coordinates {
(0cm,0cm)
(0.15cm,0cm)        
(0.3cm,0cm)         
};%
}
}
\begin{tikzpicture}

\begin{axis}[%
width=0.28\linewidth,
height=0.29\linewidth,
scale only axis,
every x tick label/.append style={font=\color{black}, font=\tiny},
every y tick label/.append style={font=\color{black}, font=\tiny},
xmin=30,
xmax=200,
xlabel={\footnotesize Size of the map $(k)$},
xlabel style={at={(0.5,-0.07)}},
xtick={30, 50, 75, 100, 150, 200},
xticklabels={30, 50, 75, 100, 150, 200},
ymin=0.04,
ymax=0.07,
ytick={0.04,0.05,0.06,0.07},
yticklabels={0.04,0.05,0.06,0.07},
yminorticks=true,
ylabel={\footnotesize Average geodesic error},
ylabel style={at={(-0.15,0.48)}},
axis background/.style={fill=white},
xmajorgrids,
ymajorgrids,
yminorgrids,
legend style={at={(1.83,0.71)},anchor=south east,legend cell align=left, align=left, draw=white!15!black},
legend style={inner sep=0pt}
]
\addplot [color=mycolor2,dashed,line width=1.5pt,mark=o,mark options={solid}]
  table[row sep=crcr]{%
30	0.0630698980782679\\
50	0.064835644894863\\
75	0.0653196899319946\\
100	0.0601279841813541\\
150	0.0658588534582569\\
200	0.0674629489696546\\
};
\addlegendentry{\footnotesize $\text{Fmap}_{k}+\text{ICP}$};

\addplot [color=mycolor5,dashed,line width=1.5pt,mark=o,mark options={solid}]
  table[row sep=crcr]{%
30	0.0577368139540182\\
50	0.0551554185651868\\
75	0.0564840469521786\\
100	0.0519360284370615\\
150	0.0503639961400362\\
200	0.0464930228555058\\
};
\addlegendentry{\footnotesize $\text{\textbf{Ours}}_{10 \ldots k}$};

\end{axis}
\end{tikzpicture}
\put(2,37.5){\footnotesize Source}
\put(33,40){\footnotesize Ini}
\put(26.5,35){\footnotesize ($10\times 10$)}
\put(55,40){\footnotesize Fmap$_{200}$}
\put(57,35){\footnotesize +ICP}
\put(85.4,40){\footnotesize Ini}
\put(75,35){\footnotesize +\textbf{Ours}$_{10 \ldots 200}$}
\end{overpic}
\vspace{-0.75cm}
\caption{\label{fig:res:eg:fMap_diff_size}Impact of the input functional map size. Given a pair of shapes, we use a fixed set of descriptors and the approach of \cite{nogneng17} to compute a functional map of size $k \times k$ and refine it with ICP. Alternatively, we compute a functional map of size $10\times 10$ using the same approach and upsample it to $k\times k$ using our method. Differently from the ICP baseline, our method leads to improvement as $k$ grows. On the right we show a qualitative illustration for $k=200$.
\vspace{-0.25cm}}
\end{figure}
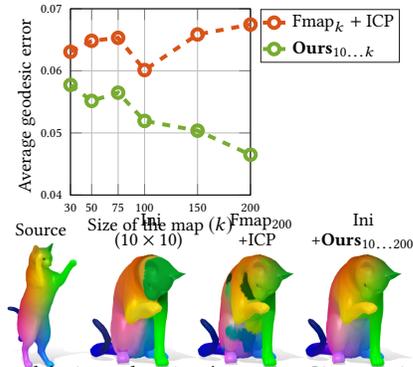

\subsection{Acceleration Strategies}
We propose three ways to accelerate \name.

\subsubsection{Larger step size}
The basic method increases the size of the functional map by one row and one column at each iteration. This choice is supported by our theoretical analysis below, which suggests that increasing by one at each iteration helps to promote isometric maps, when they are present. In practice our method also achieves good accuracy with larger increments ranging between $2$ and $5$ (see supplementary materials for an illustration). 
%
We also note that in some settings (e.g., in the context of partial correspondence or in
challenging non-isometric pairs), it is more reasonable to have rectangular functional maps with
more rows than columns. There, we increase the number of rows with higher increments than that of columns. We point out these explicitly in Section \ref{sec:results}.

\subsubsection{Approximate nearest neighbors} 
We can also use approximate nearest neighbor instead of exact nearest neighbors during upsampling. This is particularly useful in higher dimensions where such queries can become expensive. In practice, we have observed that using the FLANN library~\cite{flann} can lead to a 30x time improvement with negligible impact on final quality ($\sim$0.001\% decrease of average accuracy).

\subsubsection{Sub-sampling}
%
In the presence of perfect information, a functional map $\C$ of size $k \times k$ is fully determined by $k$ point correspondences. Thus, it is possible to sample a small number (typically a few hundred) points on each shape, perform our refinement using the spectral embedding of only those points, and then convert the final functional map to a {\em dense} pointwise correspondence only once. In practice we simply use Euclidean farthest point sampling starting from a random seed point.

\begin{figure}[bt]
\centering
\begin{overpic}[trim=0cm 0cm 0cm 0cm,clip,width=0.999\linewidth]{./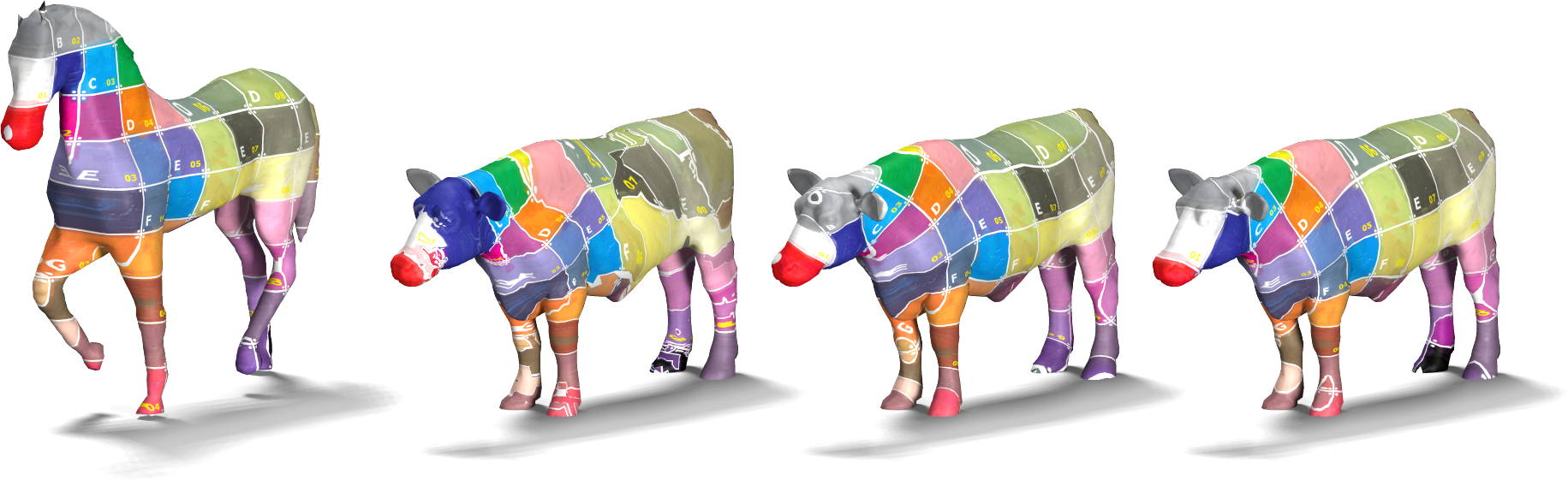}
\put(7.5,30){\footnotesize Source}
\put(29,30){\footnotesize Initialization}
\put(57.5,30){\footnotesize \textbf{Ours}}
\put(56,26.5){\footnotesize \textbf{(0.17sec)}}
\put(81,30){\footnotesize RHM}
\put(76,26.5){\footnotesize (355sec/570sec)}
\end{overpic}%
\vspace*{-0.45cm}
\caption{\label{fig:ezuz}Comparison with RHM \cite{ezuz2018reversible}. Both methods are initialized with a $17\times 10$ functional map provided by the authors of~\cite{ezuz2018reversible}. The reported runtimes (excluding pre-computation) are for a CPU implementation of our method with acceleration, and a (GPU/CPU) implementation of RHM. The runtime of pre-computation for our method is 7s (and 70s for RHM). Our solution has comparable quality and is more than 2 orders of magnitude faster.
}
\vspace{-0.2cm}
\end{figure}

\subsection{Relation to Other Techniques}
\label{sec:relation_others}
While closely related to multiple existing techniques, our method is
fundamentally different in several ways that we highlight below.

\paragraph{Iterative Closest Point} ICP refinement of functional
maps \cite{ovsjanikov2012functional} differs in that our method
progressively {\em increases} the dimension of the spectral embedding
during refinement. This crucial difference allows us to process
smaller initial functional maps, which are easier to compute, and
avoids getting trapped in local minima at higher dimensions,
significantly improving the final accuracy.
\revised{
Figure~\ref{fig:res:eg:tosca_iter} shows the accuracy of our method compared to ICP with different dimensions.
All methods in this figure refine the same initial pointwise map at \#iter = 1, which is computed
using~\cite{ren2018continuous} with the orientation-preserving term.
}
Moreover, differently from ICP our approach does not force the
singular values of functional maps to be 1, and inverts the direction
of the pointwise and functional maps in a way that is consistent with
the directions of a map and its
pull-back. 
\revised{As we show in Section~\ref{sec:theory}, rather than promoting area-preserving pointwise maps as done in ICP, our method implicitly optimizes an energy that promotes full isometries.}
%
In Figure \ref{fig:res:eg:fMap_diff_size} we further illustrate that
our method produces significantly more accurate maps in higher dimensions.
\revised{
We initialize the maps with the approach of \cite{nogneng17} using the WKS descriptors and 2 landmarks.
}


\paragraph{BCICP} \cite{ren2018continuous} is a recent powerful
technique for improving noisy correspondences, based on refining maps
in both the spectral and spatial domains, while incorporating
bijectivity, smoothness and coverage through a series of sophisticated
update steps. While often accurate, this method requires the
computation of geodesic distances, is inefficient, and suffers from
poor scalability. As an extension of the original ICP, this method
also uses spectral embeddings of {\em fixed} size. As we show in our
tests, our very simple approach can achieve similar and even superior
accuracy at a fraction of the time cost.

\begin{figure}[!t]
\centering
\begin{overpic}
  [trim=5cm 0cm 8cm 0cm,clip,width=1\columnwidth, grid=false]{./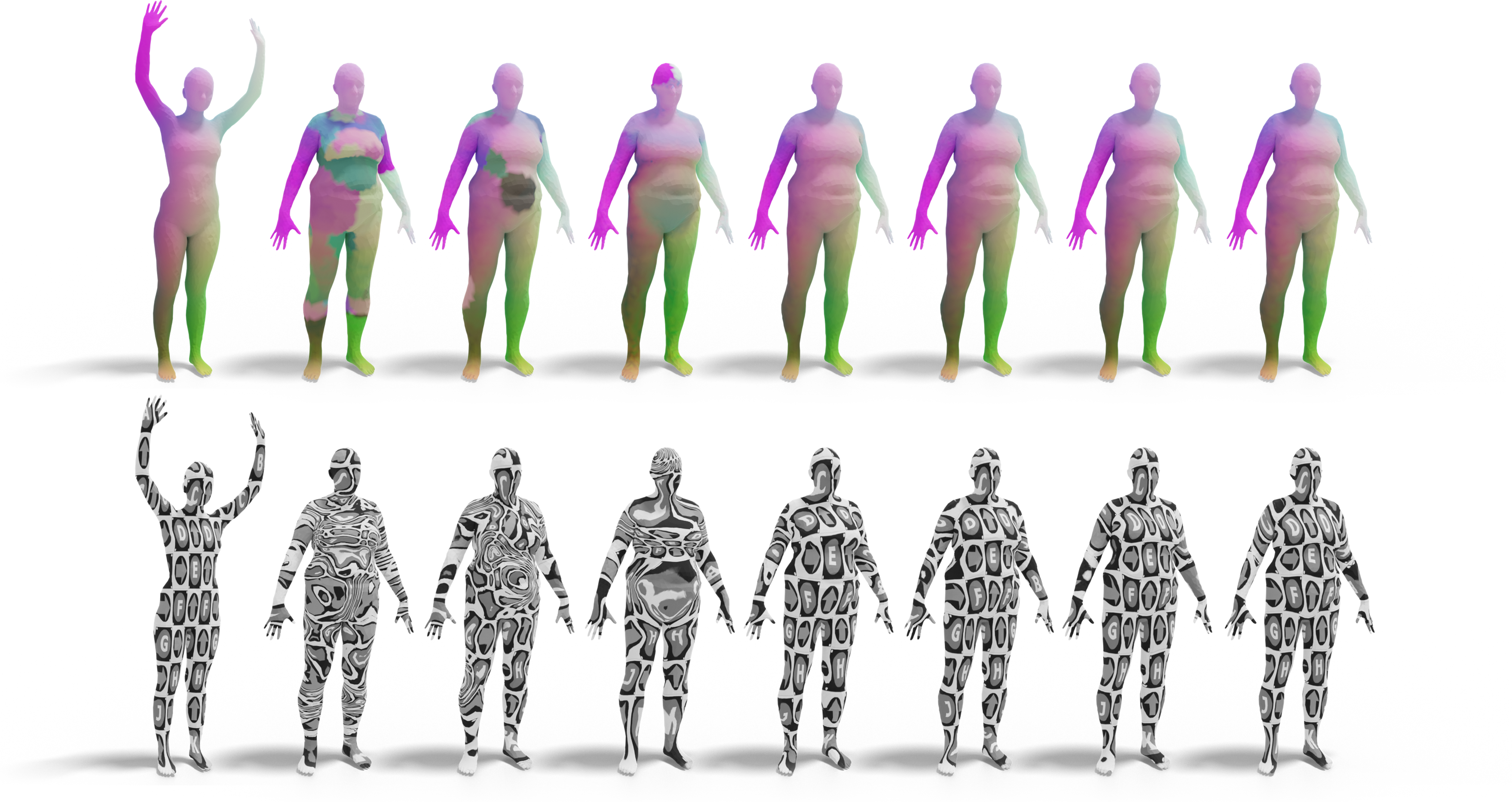}
  \put(6,61){\footnotesize Source}
  
  \put(19.4,61){\footnotesize Ini}
  \put(16.6,58){\footnotesize \textbf{runtime} =}
  \put(17.5,28){\footnotesize \textbf{error} = }
  
  \put(31,61){\footnotesize ICP}
  \put(31.6,58){\footnotesize 3s}
  \put(30.5,28){\footnotesize 0.069}
  
  \put(42,61){\footnotesize PMF}
  \put(42.5,58){\footnotesize 312s}
  \put(42.5,28){\footnotesize 0.108}
  
  \put(54,61){\footnotesize RHM}
  \put(55,58){\footnotesize 56s}
  \put(54,28){\footnotesize 0.039}
  
  \put(65,61){\footnotesize BCICP}
  \put(66,58){\footnotesize 301s}
  \put(66,28){\footnotesize 0.028}
  
  \put(77.9,61){\footnotesize \textbf{Ours}}
  \put(78,58){\footnotesize \textbf{0.47s}}
  \put(77.5,28){\footnotesize \textbf{0.024}}
  
  \put(91,61){\footnotesize GT}
  \end{overpic}
  \vspace{-0.8cm}
  \caption{Refinement example. Given the initialization computed from
    WKS descriptors, we compare our method with existing refinement
    techniques, by visualizing the maps via color transfer (first row)
    and texture transfer (second row). We also report the average
    error and the runtime for each method.  Note that our method is
    120x faster than RMH and 640x faster than BCICP, while resulting in
    lower error.}
\label{fig:res:eg:faust_refinement}
\vspace{-0.7cm}
\end{figure}

\paragraph{Reversible Harmonic Maps (RHM)} \cite{ezuz2018reversible}
is another recent approach for map refinement, based on minimizing the
bi-directional geodesic Dirichlet energy. In a similar spirit to ours,
this technique is based on splitting the alignment in a
higher-dimensional embedding space from the computation of pointwise
maps. However, it requires the computation of all pairs of geodesic
distances, and results in least squares problems with size
proportional to the number of points on the shapes. Furthermore,
similarly to ICP and BCICP, the embedding dimension is fixed
throughout the approach. As a result, our approach is significantly
more efficient (see Figure~\ref{fig:ezuz}), scalable, and, as we show
below, more accurate in many cases.

\paragraph{Spatial refinement methods} Spatial refinement methods 
such as PMF \cite{vestner2017product,vestner2017efficient} operate via an
alternating diffusion process based on solving a sequence of
linear assignment problems; this approach demonstrates high accuracy in
challenging cases, but is severely limited by mesh resolution. Other
approaches formulate shape correspondence by seeking for optimal
transport plans iteratively via Sinkhorn projections, but they either
scale poorly \cite{solomon2016entropic} or can have issues with
non-isotropic meshes~\cite{mandad2017variance}. Interestingly,
although fundamentally different, a link exists between \name\
and PMF that we describe in the supplementary materials. \\

In Figure~\ref{fig:res:eg:faust_refinement} we show qualitative
comparisons with the methods above on pairs of remeshed shapes from
the FAUST \cite{FAUST} dataset. We provide a more complete evaluation
with state-of-the-art refinement methods in Section \ref{sec:results}.

\subsection{Derivation and Analysis}
\label{sec:theory}
In this section we provide a theoretical justification for our method by first formulating a variational optimization problem and
then arguing that \name\ provides an efficient way of solving it.


\subsubsection{Optimization Problem}
%
We consider the following problem:
\begin{align}
  \label{eq:ourprob}
  \min_{\C \in \mathcal{P}} E(\C), \text{ where } E(\C) = \sum_k \frac{1}{k} \big\Vert \C_k^T \C_k - I_k \big\Vert^2_F\,. 
\end{align}
Here $\mathcal{P}$ is the set of functional maps arising from
pointwise correspondences, $\C_k$ is the principal $k \times k$
submatrix of $\C$ \revised{(i.e., the submatrix of $\C$ consisting of the first $k$ rows and columns)}, and $I_k$ is an identity matrix of size $k$. In
other words, Eq. \eqref{eq:ourprob} aims to compute a pointwise map
associated with a functional map in which every principal submatrix
is orthonormal.

The energy in Eq. \eqref{eq:ourprob} is different from the commonly
used penalty promoting orthonormal functional maps
\cite{ovsjanikov2012functional,kovnatsky2013coupled,kovnatsky2016madmm}
in two ways: first we explicitly constrain $\C$ to arise from a
point-to-point map, and second we enforce orthonormality of every
principal submatrix rather than the full functional map of a given
fixed size. Indeed, an orthonormal functional map corresponds to only
a \emph{locally area-preserving} point-to-point correspondence
\cite{rustamov2013map}. Instead, the energy in
Eq. \eqref{eq:ourprob} is much stronger and promotes complete 
isometries as guaranteed by the following theorem (proved in
Appendix A):
\begin{theorem}
  \label{thm:energy}
  Given a pair of shapes whose Laplacian matrices have the same
  eigenvalues, none of which are repeating, a functional map
  $\C \in \mathcal{P}$ satisfies $E(\C) = 0$ if and only if the
  corresponding pointwise map is an isometry.
\end{theorem}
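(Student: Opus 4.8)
The plan is to prove both implications by first reducing the condition $E(\C)=0$ to a purely linear-algebraic characterization of $\C$, and then connecting that characterization to the metric structure via the heat kernel. Since $E(\C)=\sum_k\frac1k\|\C_k^\top\C_k-I_k\|_F^2$ is a sum of nonnegative terms, $E(\C)=0$ holds if and only if every principal submatrix is orthogonal, i.e. $\C_k^\top\C_k=I_k$ for all $k$; the weights $\frac1k$ play no role in the zero set.

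First I would show that this whole family of constraints forces $\C$ to be diagonal with entries $\pm1$, by induction on $k$. For $k=1$, $\C_1^\top\C_1=I_1$ gives $\C_{11}=\pm1$. Assuming $\C_{k-1}$ is diagonal orthogonal, I write $\C_k$ in block form with $\C_{k-1}$ in the top-left corner, a new column $u$, a new row $v^\top$, and a corner entry $c$. Because $\C_k$ is square and orthogonal, both its columns and its rows are orthonormal: comparing the squared norms of the first $k-1$ columns with those of $\C_{k-1}$ forces $v=0$; the norm of the last row then forces $c=\pm1$; and the norm of the last column forces $u=0$. Hence $\C_k$ is again diagonal orthogonal, completing the induction and showing $\C=\mathrm{diag}(s_1,s_2,\dots)$ with $s_j\in\{-1,+1\}$.

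Next I would translate $\C=\mathrm{diag}(\pm1)$ back into a statement about the underlying map $T$. Written in the complete eigenbasis, diagonality with $\pm1$ entries is equivalent to $\phi_j^\N\circ T=s_j\,\phi_j^\M$ for every $j$. The key step is the heat-kernel computation: using that the two Laplacians share the same (non-repeating) eigenvalues $\lambda_j$ and that $s_j^2=1$,
\begin{align*}
k_t^\N\big(T(x),T(y)\big) &=\sum_j e^{-\lambda_j t}\,\phi_j^\N(T(x))\,\phi_j^\N(T(y))\\
&=\sum_j e^{-\lambda_j t}\,s_j^2\,\phi_j^\M(x)\,\phi_j^\M(y)=k_t^\M(x,y)
\end{align*}
for all $t>0$ and all $x,y$. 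Thus $T$ preserves the heat kernel, and by Varadhan's formula $d^2(x,y)=-\lim_{t\to0}4t\log k_t(x,y)$ it preserves geodesic distances; injectivity follows because $T(x)=T(y)$ would force $d_\M(x,y)=0$, and bijectivity follows from the equality of volumes implied by the shared spectrum, so $T$ is an isometry. The converse is the easy direction: an isometry pulls back eigenfunctions, and since each eigenvalue is simple the $j$-th eigenspace is one-dimensional, so $T^*\phi_j^\N=\pm\phi_j^\M$; this makes $\C$ diagonal with $\pm1$ entries, whence every $\C_k$ is orthogonal and $E(\C)=0$.

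I expect the diagonalization induction to be routine; the main obstacle is the geometric step in the forward direction, namely arguing that preserving every eigenfunction (equivalently, commuting with the Laplacian) forces a genuine metric isometry rather than merely an area-preserving or spectrum-preserving correspondence. Handling this cleanly requires the spectral reconstruction of the metric through the heat kernel together with the non-repeating-eigenvalue hypothesis, which is precisely what rules out rotations inside higher-dimensional eigenspaces and pins $\C$ down to a signed diagonal.
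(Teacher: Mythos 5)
Your proof is correct and follows essentially the same route as the paper: reduce $E(\C)=0$ to orthonormality of every principal submatrix, run the block induction to force $\C$ to be diagonal with $\pm 1$ entries, and then pass through the equivalence between diagonal-orthonormal functional maps and isometries under the shared simple-spectrum hypothesis. The only difference is that the paper invokes that equivalence as a known result (Lemma~\ref{lemma:diag}, proved in the supplementary material), whereas you prove it inline via the heat-kernel expansion and Varadhan's formula together with an eigenfunction pull-back argument for the converse --- which is the standard proof of exactly that lemma.
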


\begin{figure}[!t]
\centering
  \begin{overpic}
  [trim=0cm 0cm 0cm 0cm,clip,width=1\linewidth, grid = false]{./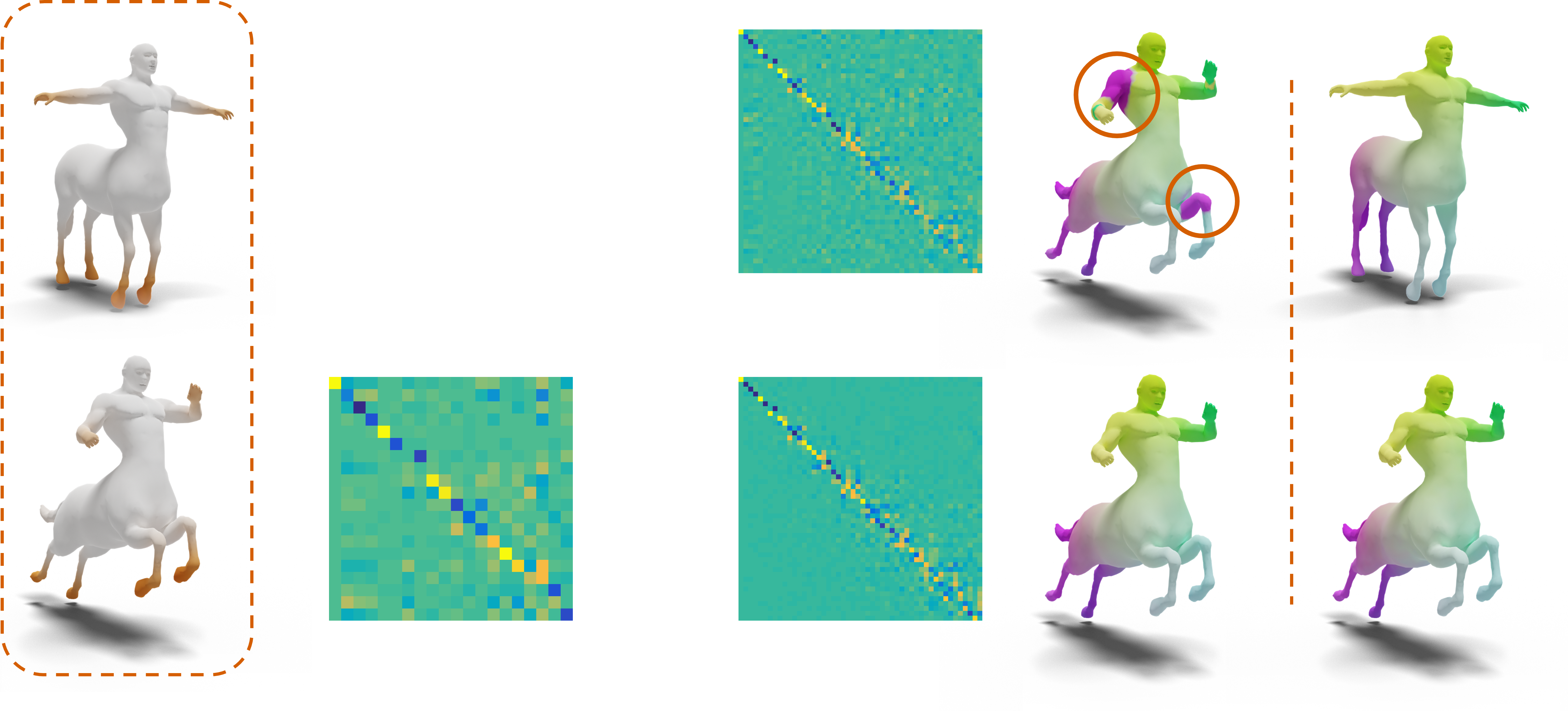} 
  
  \put(29.5,34){\footnotesize $(1)$}
  \put(12.5,31){
    \begin{tikzpicture}[node distance=2cm]
    \node (A) at (0,0) {};
    \node (B) at (2.8,0) {};
    \draw[->](A)--(B);
    (A) edge (B);
    \end{tikzpicture}
  }
  \put(16.5,12){\footnotesize $(1)$}
  \put(12.5,9){
    \begin{tikzpicture}[node distance=2cm]
    \node (A) at (0,0) {};
    \node (B) at (0.6,0) {};
    \draw[->](A)--(B);
    (A) edge (B);
    \end{tikzpicture}
  }
  \put(40,12){\footnotesize $(2)$}
  \put(36.5,9.5){\footnotesize $\to\!\cdots\!\to$}
  \put(63,12){\footnotesize $(3)$}
  \put(63.4,9.5){\footnotesize $\to$}
  \put(63,34){\footnotesize $(3)$}
  \put(63.4,31.5){\footnotesize $\to$}
  \put(24,22){\footnotesize $20\times 20$}
  \put(50.5,22){\footnotesize $50\times 50$}
  \put(50.5,44){\footnotesize $50\times 50$}
  \put(86, 44){\footnotesize Source}
  \put(83, 22){\footnotesize Ground-truth}
 
  \end{overpic}
  \vspace*{-0.8cm}
\caption{\label{fig:res:eg:fMap_size}
We use an existing functional map pipeline (1) to compute either a $50\times 50$ (top row) or $20\times 20$ (bottom row) functional map using the same input descriptors. We then upsample (2) the smaller map to also have size 50x50 using our technique, and convert both to pointwise maps (3). Our approach leads to better results as can be seen, e.g., on the arms and legs.\vspace{-0.2cm}}
\end{figure}

To derive \name\ as a method to solve the optimization problem in Eq. \eqref{eq:ourprob} we first consider a single term
inside the sum, and write the problem explicitly in terms of the binary matrix $\Pi$ representing the pointwise map:
\begin{align}
  \label{eq:ourprob2}
  &\min_{\Pi} \| \C_k^T \C_k - I_k \|^2_F = \min_{\Pi} \| \C_k \C_k^T - I_k \|^2_F\,, \\ &\text{ where } \C_k =
  (\Phi^{k}_{\M})^{+} \Pi \Phi^k_{\N}\,.
\end{align}
This problem is challenging due to the constraints on $\Pi$. To address
this, we use \emph{half-quadratic splitting,} by decoupling $\Pi$ and
$\C_k$. This leads to the following two separate sub-problems:
\begin{align}
  \label{eq:ourprob3}
  \min_{\Pi} \|   (\Phi^{k}_{\M})^{+} \Pi \Phi^k_{\N} \C_k^T - I_k
  \|^2_F\,, \\
  \label{eq:ourprobck}
  \min_{\C_k} \|   \C_k - (\Phi^{k}_{\M})^{+} \Pi \Phi^k_{\N} \|^2_F\,.
\end{align}

Now we remark that Eq. \eqref{eq:ourprob3} does not fully constrain
$\Pi$ since it only penalizes the image of $\Pi$ within the vector
space of $\Phi^{k}_{\M}$. Instead, inspired by a related construction
in \cite{ezuz2017deblurring} we add a regularizer
$\mathcal{R}(\Pi) = \| (I - \Phi^{k}_{\M} (\Phi^{k}_{\M})^{+}) \Pi
\Phi^k_{\N} \C_K^T \|_{A_{\M}}^2$, where we use the weighted matrix
norm $\|X\|^2_ {A_\M} = tr(X^T A_\M X)$ and $A_\M$ is the area matrix
of shape $\M$. This regularizer penalizes the image of
$\Pi \Phi^k_{\N} \C_k^T$ that lies outside of the span of
$\Phi^{k}_{\M}$, which intuitively means that no spurious
high frequencies should be created. Finally, it can be shown (see
proof in the appendix) that solving Eq. \eqref{eq:ourprob3} with the
additional term $\mathcal{R}(\Pi)$ is equivalent to solving:
\begin{align}
  \label{eq:ourprob4}
  \min_{\Pi} \|   \Pi \Phi^k_{\N} \C_k^T - \Phi^{k}_{\M} \|^2_F\,.
\end{align}

The problem in Eq. \eqref{eq:ourprob4} has a closed-form solution, which reduces to the
nearest-neighbor search described in Eq. \eqref{eq:p2p_recovery}
above. Moreover, the problem in Eq. \eqref{eq:ourprobck} is solved simply via 
$\C_k = (\Phi^{k}_{\M})^{+} \Pi \Phi^k_{\N}$ since the minimization is unconstrained.

Finally, in this derivation we assumed a specific value of $k$. In
practice we start with a particular value $k_0$ and progressively
increase it. This is motivated by the fact that if a principal
$k \times k$ submatrix is orthonormal, it provides a very strong
initialization for the larger problem on a $(k+1) \times (k+1)$ matrix
since only a single new constraint on the additional row and column
must be enforced. This leads to our method \name:

\begin{enumerate}
  \item Given $k = k_0$ and an initial $\C_0$ of size $k_0 \times k_0$.
  \item Compute $\argmin_{\Pi} \|   \Pi \Phi^k_{\N} \C_k^T - \Phi^{k}_{\M} \|^2_F.$
  \item Set $k = k+1$ and compute $\C_k =  (\Phi^{k}_{\M})^{+} \Pi \Phi^k_{\N}$.
  \item Repeat the previous two steps until $k = k_{\max}$.
\end{enumerate}


\subsubsection{Empirical Accuracy}
We demonstrate that this simple procedure is remarkably efficient in
minimizing the energy in Eq. \eqref{eq:ourprob}. 
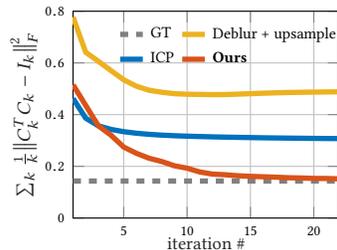
\begin{wrapfigure}{r}{0.46\linewidth}
  \vspace{-0.55cm}
  \begin{center}
  \hspace{-0.8cm}
%
%
\definecolor{mycolor1}{rgb}{0.50196,0.50196,0.50196}%
\definecolor{mycolor2}{rgb}{0.92941,0.69412,0.12549}%
\definecolor{mycolor3}{rgb}{0.00000,0.44706,0.74118}%
\definecolor{mycolor4}{rgb}{0.85098,0.32549,0.09804}%
\pgfplotsset{
compat=1.11,
legend image code/.code={
\draw[mark repeat=2,mark phase=2]
plot coordinates {
(0cm,0cm)
(0.15cm,0cm)        
(0.3cm,0cm)         
};%
}
}
\begin{tikzpicture}

\begin{axis}[%
width=0.9\linewidth,
height=0.7\linewidth,
at={(1.713in,1.987in)},
scale only axis,
every x tick label/.append style={font=\color{black}, font=\tiny},
every y tick label/.append style={font=\color{black}, font=\tiny},
xmin=1,
xmax=22,
xlabel style={font=\color{white!15!black}},
xlabel style={at={(0.5,-0.05)}},
xlabel={\footnotesize iteration \#},
ymin=0,
ymax=0.8,
xmajorgrids,
ymajorgrids,
ylabel style={font=\color{white!15!black}},
ylabel style={at={(-0.1,0.5)}},
ylabel={\footnotesize $\sum_k \frac{1}{k}\big\Vert C_k^TC_k - I_k \big\Vert_F^2$},
axis background/.style={fill=white},
legend style={at={(1.04,0.68)}, anchor=south east, legend cell align=left, align=left, draw=none, fill=white, fill opacity=0, draw opacity=1, text opacity=1,legend columns=2}
]
\addplot [color=mycolor1, dashed, line width=2.0pt]
  table[row sep=crcr]{%
1	0.14319118\\
2	0.14319118\\
3	0.14319118\\
4	0.14319118\\
5	0.14319118\\
6	0.14319118\\
7	0.14319118\\
8	0.14319118\\
9	0.14319118\\
10	0.14319118\\
11	0.14319118\\
12	0.14319118\\
13	0.14319118\\
14	0.14319118\\
15	0.14319118\\
16	0.14319118\\
17	0.14319118\\
18	0.14319118\\
19	0.14319118\\
20	0.14319118\\
21	0.14319118\\
22	0.14319118\\
};
\addlegendentry{\scriptsize GT}

\addplot [color=mycolor2, line width=2.0pt]
  table[row sep=crcr]{%
1	0.7770559\\
2	0.6412428\\
3	0.6062253\\
4	0.5701211\\
5	0.5347045\\
6	0.5099887\\
7	0.4943626\\
8	0.4865527\\
9	0.4810209\\
10	0.4788942\\
11	0.4781905\\
12	0.4773112\\
13	0.4775495\\
14	0.4792343\\
15	0.4811165\\
16	0.4829124\\
17	0.4843963\\
18	0.4856162\\
19	0.4864407\\
20	0.4871846\\
21	0.4877012\\
22	0.4883532\\
};
\addlegendentry{\scriptsize Deblur + upsample}

\addplot [color=mycolor3, line width=2.0pt]
  table[row sep=crcr]{%
1	0.4625489\\
2	0.3863334\\
3	0.3567456\\
4	0.3428626\\
5	0.3340743\\
6	0.3281089\\
7	0.3238292\\
8	0.3207212\\
9	0.3184527\\
10	0.3166507\\
11	0.3153272\\
12	0.3140464\\
13	0.3130412\\
14	0.3121059\\
15	0.3112864\\
16	0.3104206\\
17	0.3097965\\
18	0.309195\\
19	0.3086702\\
20	0.3081268\\
21	0.3077185\\
22	0.3074144\\
};
\addlegendentry{\scriptsize ICP}

\addplot [color=mycolor4, line width=2.0pt]
  table[row sep=crcr]{%
1	0.5152937\\
2	0.4358049\\
3	0.356546\\
4	0.3203655\\
5	0.2745724\\
6	0.2504646\\
7	0.2314047\\
8	0.2186986\\
9	0.20220259\\
10	0.19277947\\
11	0.17824983\\
12	0.16997461\\
13	0.16753245\\
14	0.16356839\\
15	0.16129899\\
16	0.15967849\\
17	0.15801394\\
18	0.15594807\\
19	0.15498768\\
20	0.15327586\\
21	0.15304166\\
22	0.15168794\\
};
\addlegendentry{\scriptsize \textbf{Ours}}

\end{axis}
\end{tikzpicture}%
  \end{center}
  \vspace{-0.4cm}
  \caption{Value of $E(\C)$ across iterations}
  \label{fig:mtd:orthoErr_over_iter}
  \vspace{-0.35cm}
\end{wrapfigure}
For this in Figure~\ref{fig:mtd:orthoErr_over_iter} we plot the value
of the energy during the iterations of \name\ from $k=20$ to $k=120$
with step $5$ on 100 pairs of shapes from the FAUST dataset, and
compare it to the ICP refinement using $k=120$. We also evaluate
a method in which we perform the same iterative spectral upsampling as
in \name\ but use the pointwise map recovery from
\cite{ezuz2017deblurring} instead of Eq. \eqref{eq:p2p_recovery}. For all methods, at
every iteration we convert the computed pointwise map to a functional
map $\C$ of fixed size $120\times120$ and report $E(\C)$.  Our
approach results in maps with energy very close to the ground truth,
while Deblur with upsampling performs poorly, highlighting the
importance of the adapted pointwise recovery method. In the supplementary materials we further detail the differences between the two methods and their relation to PMF.

Finally, in Figure \ref{fig:res:eg:fMap_size} we also show the result of an existing functional map estimation pipeline with orientation preservation \cite{ren2018continuous} for a map of size $50\times 50$ with careful parameter tuning for optimality, which nevertheless leads to noise in the final point-to-point map. Initializing the map to  size $20\times 20$ using exactly the same descriptors and up-sampling it to a larger size with our approach leads to a significant improvement.

\section{Results}
\label{sec:results}
We conducted an extensive evaluation of our method, both in terms of its empirical properties (Section~\ref{sec:perf}) and in relation to existing methods, as we showcase across several applications (Section~\ref{sec:apps}).

\begin{figure}[!t]
\vspace{-0.3cm}
\definecolor{mycolor1}{rgb}{0.00000,0.44700,0.74100}%
\definecolor{mycolor2}{rgb}{0.85000,0.32500,0.09800}%
\definecolor{mycolor3}{rgb}{0.92900,0.69400,0.12500}%
\pgfplotsset{scaled x ticks=false}
\pgfplotsset{minor grid style={dashed}}
\pgfplotsset{
compat=1.11,
legend image code/.code={
\draw[mark repeat=2,mark phase=2]
plot coordinates {
(0cm,0cm)
(0.0cm,0cm)        
(0.3cm,0cm)         
};%
}
}
\begin{tikzpicture}
\begin{axis}[
width=0.51\linewidth,
height=0.45\linewidth,
every x tick label/.append style={font=\color{black}, font=\tiny},
every y tick label/.append style={font=\color{black}, font=\tiny},
xmin = 0, xmax = 20,
ymin = 0, ymax = 100,
xminorgrids,
yminorgrids,
minor ytick={0,50, 100},
ytick={0,50,100},
yticklabels={0,0.05,0.1},
minor xtick={0,5,10,15,20},
xlabel = {\footnotesize Mesh resolution ($\times 10^3$)},
ylabel = {\footnotesize Average error},
ylabel style={at={(-0.12,0.48)}},
legend style={at={(1,1)}, legend columns=1, fill opacity=0.8, text opacity = 1, draw opacity=1, legend columns=3}
]
\addplot [color=mycolor1, line width=1.5pt]
  table[row sep=crcr]{%
1	62.7\\
2	83.4\\
5	76.4\\
10	74.7\\
15	60.1\\
20	58.3\\
};

\addplot [color=mycolor2,line width=1.5pt]
  table[row sep=crcr]{%
1	59.3\\
2	57.5\\
5	57.5\\
10	57.3\\
15	57.2\\
20	57.2\\
};

\addplot [color=mycolor3, line width=1.5pt]
  table[row sep=crcr]{%
1	27.7\\
2	27.4\\
5	26.4\\
10	26.5\\
15	26.2\\
20	26.1\\
};

\end{axis}
\end{tikzpicture}\definecolor{mycolor1}{rgb}{0.00000,0.44700,0.74100}%
\definecolor{mycolor2}{rgb}{0.85000,0.32500,0.09800}%
\definecolor{mycolor3}{rgb}{0.92900,0.69400,0.12500}%
\pgfplotsset{scaled x ticks=false}
\pgfplotsset{minor grid style={dashed}}
\pgfplotsset{
compat=1.11,
legend image code/.code={
\draw[mark repeat=2,mark phase=2]
plot coordinates {
(0cm,0cm)
(0.0cm,0cm)        
(0.3cm,0cm)         
};%
}
}
\begin{tikzpicture}
\begin{axis}[
width=0.51\linewidth,
height=0.45\linewidth,
every x tick label/.append style={font=\color{black}, font=\tiny},
every y tick label/.append style={font=\color{black}, font=\tiny},
xmin = 0, xmax = 20,
ymin = -0.01, ymax = 500,
xminorgrids,
yminorgrids,
minor ytick={0,250,500},
ytick={0,250,500},
minor xtick={0,5,10,15,20},
xlabel = {\footnotesize Mesh resolution ($\times 10^3$)},
ylabel = {\footnotesize Runtime (s)},
ylabel style={at={(-0.12,0.48)}},
legend style={at={(0.55,0.95)}, legend columns=1, fill opacity=0.8, text opacity = 1, draw opacity=1, legend columns=1}
]
\addplot [color=mycolor1, line width=1.5pt]
  table[row sep=crcr]{%
1	0.249\\
2	0.423\\
5	1.082\\
10	2.169\\
15	3.206\\
20	4.033\\
};
\addlegendentry{\scriptsize ICP}

\addplot [color=mycolor2, line width=1.5pt]
  table[row sep=crcr]{%
1	31.58\\
2	31.31\\
5	59.79\\
10	130.2\\
15	272.9\\
20	472.1\\
};
\addlegendentry{\scriptsize BCICP}

\addplot [color=mycolor3, line width=1.5pt]
  table[row sep=crcr]{%
1	1.316\\
2	2.353\\
5	5.64\\
10	11.05\\
15	17.39\\
20	22.32\\
};
\addlegendentry{\scriptsize \textbf{Ours}}
\end{axis}
\end{tikzpicture}
\vspace{-0.3cm}
\caption{Scalability and accuracy test on 6 pairs of scanned bones. The source shape has 5K vertices. We compare to ICP and BCICP on the same target shape with different resolution (ranging from 1K to 20K vertices).}
\label{fig:res:eval:scalability}
\vspace{-0.1cm}
\end{figure}
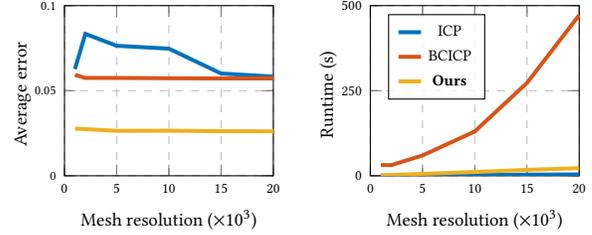

\begin{figure}[!t]
\centering
\begin{overpic}
  [trim=0cm 2cm 2cm 0cm,clip,width=1\columnwidth,grid=false]{./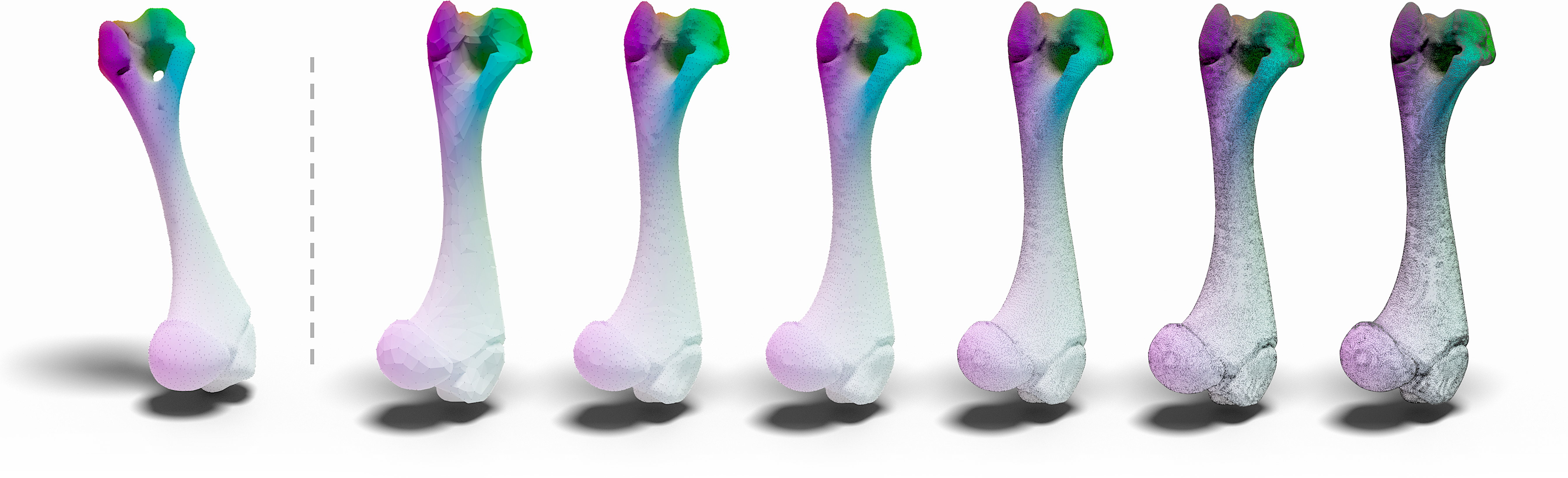}

  \put(1.5,-1){\scriptsize Source: $n=5K$}
  \put(24,-1){\scriptsize $n=1K$}
  \put(24.9,-4){\scriptsize $t=1.2$}
  \put(37.5,-1){\scriptsize $n=5K$}
  \put(38,-4){\scriptsize $t=5.7$}
  \put(49.5,-1){\scriptsize $n=10K$}
  \put(51,-4){\scriptsize $t=11$}
  \put(62,-1){\scriptsize $n=50K$}
  \put(63,-4){\scriptsize $t=55$}
  \put(74,-1){\scriptsize $n=100K$}
  \put(75,-4){\scriptsize $t=110$}
  \put(88,-1){\scriptsize $n=150K$}
  \put(88.5,-4){\scriptsize $t=169$}
  \end{overpic}
\caption{Scalability. The vertices of the bone shapes are colored black to show the resolution (zoom in for better view), while RGB colors encode the computed map, via pull-back from the source. The corresponding runtime for our upsampling, from $5\times 5$ to $50 \times 50$ \emph{without any acceleration}, is reported below each shape (in seconds).}
\label{fig:res:eg:bones}
\vspace{-0.3cm}
\end{figure}

\subsection{Performance of \name}\label{sec:perf}
We start by showing an evaluation of scalability, as well as of the stability and smoothness of our method.

\subsubsection{Scalability}
In Figure~\ref{fig:res:eval:scalability} we assess the scalability of our method using shapes of humerus bones of wild boars acquired using a 3D sensor. Each bone was scanned independently, and the ground truth was provided by domain experts as 24 consistent landmarks 
 \cite{gunz2013semilandmarks} on each shape. We show the average runtime and accuracy over 6 maps w.r.t. different target mesh resolution; 
the input descriptors (one landmark point and WKS descriptors \cite{aubry2011wave}) for the initialization are fixed. While BCICP, the current state-of-the-art method, quickly becomes prohibitively expensive at high resolution, both ICP and \name\ without acceleration have approximately linear complexity. On the other hand, the accuracy for BCICP and our method are stable w.r.t. different resolutions, while ICP is less accurate and more unstable. 
Figure~\ref{fig:res:eg:bones} also shows an example with meshes having $\sim$150K vertices.

%
Figure~\ref{fig:res:eval:subsampling} shows the results of our sub-sampling strategy for acceleration on one pair of bones, where the source has 20K vertices, and the target has 5K vertices. 
The corresponding runtime (blue curve) and average error (red curve) w.r.t. different sampling size for the source shape are reported. We can see that around 100 samples on a 20K mesh are enough to produce a refined map with similar quality to that of our method without sampling (whose average is shown as dashed black).

\begin{figure}
\centering
    \begin{minipage}{0.45\linewidth}
    \begin{center}
    \hspace{-0.5cm}
    \definecolor{mycolor1}{rgb}{0.00000,0.44700,0.74100}%
\definecolor{mycolor2}{rgb}{0.85000,0.32500,0.09800}%
\pgfplotsset{scaled x ticks=false}
\pgfplotsset{major grid style={dashed, mycolor2}}
\pgfplotsset{minor grid style={dashed}}
\pgfplotsset{
compat=1.11,
legend image code/.code={
\draw[mark repeat=2,mark phase=2]
plot coordinates {
(0cm,0cm)
(0.0cm,0cm)        
(0.3cm,0cm)         
};%
}
}
\begin{tikzpicture}
\begin{axis}[
width=1.2\linewidth,
height=1\linewidth,
xmin = 0, xmax = 500,
ymin = 0, ymax = 0.8,
axis y line* = left, 
every x tick label/.append style={font=\color{black}, font=\tiny},
every y tick label/.append style={font=\tiny\color{mycolor2}},
every y tick/.style={mycolor2},
xtick={0,100,200,300,400,500},
xlabel = {\footnotesize\# Samples},
xlabel style={at={(0.5,-0.1)}},
ylabel = {\footnotesize\textcolor{mycolor2}{Average error}},
ylabel style={at={(-0.12,0.48)}},
legend style={at={(0.72,0.95)}, legend columns=1, fill opacity=0.8, text opacity = 1, draw opacity=1}
]
\addlegendimage{only marks, mark=*,  mark options={solid, mycolor1}}
\addlegendimage{only marks, mark=*,  mark options={solid, mycolor2}}
\addlegendimage{line width=1pt, dashed, color=black}
 \addplot [color=mycolor2, line width=1pt, mark=*, mark options={solid, mycolor2},mark size=1pt]
  table[row sep=crcr]{%
5	0.476961570326499\\
10	0.685103858232299\\
15	0.424273746669431\\
20	0.430586863498764\\
25	0.553832488241968\\
30	0.55314081971706\\
35	0.422025978081773\\
40	0.446484815775622\\
45	0.422025978081773\\
50	0.441102834012709\\
55	0.093875562561079\\
60	0.0553254073548927\\
65	0.0580380881604351\\
70	0.0474307193895211\\
75	0.0425795469289073\\
80	0.0502029326314634\\
85	0.0402023792494782\\
90	0.0352456702781713\\
95	0.0317482691155184\\
100	0.0354765127230493\\
110	0.0360514004512072\\
120	0.0362375868321359\\
130	0.0327527667717354\\
140	0.032764977176903\\
150	0.0365875185780502\\
160	0.0280904900134824\\
170	0.0271593261752306\\
180	0.0256376126202158\\
190	0.0241227728856653\\
200	0.0266464329879365\\
200	0.0266464329879365\\
250	0.0248307332382092\\
300	0.0212840241770548\\
350	0.022824980884585\\
400	0.0228096008856198\\
450	0.0223580944113232\\
500	0.0222747369802069\\
};

\addplot [color=black, dashed, line width=1pt]
  table[row sep=crcr]{%
5	0.0239296102766424\\
10	0.0239296102766424\\
15	0.0239296102766424\\
20	0.0239296102766424\\
25	0.0239296102766424\\
30	0.0239296102766424\\
35	0.0239296102766424\\
40	0.0239296102766424\\
45	0.0239296102766424\\
50	0.0239296102766424\\
55	0.0239296102766424\\
60	0.0239296102766424\\
65	0.0239296102766424\\
70	0.0239296102766424\\
75	0.0239296102766424\\
80	0.0239296102766424\\
85	0.0239296102766424\\
90	0.0239296102766424\\
95	0.0239296102766424\\
100	0.0239296102766424\\
110	0.0239296102766424\\
120	0.0239296102766424\\
130	0.0239296102766424\\
140	0.0239296102766424\\
150	0.0239296102766424\\
160	0.0239296102766424\\
170	0.0239296102766424\\
180	0.0239296102766424\\
190	0.0239296102766424\\
200	0.0239296102766424\\
200	0.0239296102766424\\
250	0.0239296102766424\\
300	0.0239296102766424\\
350	0.0239296102766424\\
400	0.0239296102766424\\
450	0.0239296102766424\\
500	0.0239296102766424\\
};

\end{axis}
\begin{axis}[
width=1.2\linewidth,
height=1\linewidth,
     xmin = 0, xmax = 500,
     ymin = 0.7, ymax = 1.4,
     hide x axis,
     hide y axis]
\addplot [color=mycolor1, line width=1pt, mark=*, mark options={solid, mycolor1},mark size=1pt]
  table[row sep=crcr]{%
5	0.808810805781423\\
10	0.869764347848257\\
15	0.777591506469847\\
20	0.790248757734463\\
25	0.802769920204286\\
30	0.802078880509071\\
35	0.789994791540773\\
40	0.798469050726022\\
45	0.827588410532544\\
50	0.776666036441997\\
55	0.801351749138353\\
60	0.811637214424618\\
65	0.785080694695219\\
70	0.810261426243681\\
75	0.814488125673201\\
80	0.813496101271255\\
85	0.815681799895169\\
90	0.837235814729841\\
95	0.823259065053424\\
100	0.85276086417378\\
110	0.843581327292889\\
120	0.842003227059353\\
130	0.858761684679945\\
140	0.84020791454021\\
150	0.812741818364835\\
160	0.817512541857238\\
170	0.873807608853652\\
180	0.870598760896623\\
190	0.889918071896604\\
200	0.858291499549516\\
200	0.859113992411477\\
250	0.95507712195046\\
300	0.988957933993292\\
350	1.0281627648475\\
400	1.12908833421024\\
450	1.2547578927362\\
500	1.33499266742978\\
};

\end{axis}
\pgfplotsset{every axis y label/.append style={rotate=180,yshift=0cm}}
\pgfplotsset{major grid style={dashed, mycolor1}}
\begin{axis}[
width=1.2\linewidth,
height=1\linewidth,
every x tick label/.append style={font=\color{black}, font=\tiny},
every y tick label/.append style={font=\tiny\color{mycolor1}},
every y tick/.style={mycolor1},
xmin=0, xmax=500,
ymin=0.7, ymax=1.4,
hide x axis,
axis y line*=right,
ylabel={\footnotesize\textcolor{mycolor1}{Time (s)}},
ylabel style={at={(1.08,0.48)}}
]
\end{axis}
\end{tikzpicture}
    \vspace{-0.5cm}
    \caption{\label{fig:res:eval:subsampling}Acceleration by sampling}
    \end{center}
    \end{minipage}\hspace{0.55cm}
    \begin{minipage}{0.44\linewidth}
    \captionsetup{justification=centering,margin=0.2cm}
    \begin{center}
    \vspace{-0.5cm}
    \input{figures2/eg_stability_inset.tex}
    \caption{\label{fig:res:eval:stability}Stability of zoomOut}
    \end{center}
    \vspace{-0.5cm}
    \end{minipage}
    \vspace{-0.3cm}
\end{figure}


\subsubsection{Stability}
We also evaluate the stability of our method w.r.t. noise in the initial functional map. Here we test on a single shape pair from FAUST initialized using the approach of \cite{ren2018continuous} while fixing the size of the computed functional map to 4. Given this $4\times 4$ initial functional map, we add white noise to it and use our method to refine the map. Figure~\ref{fig:res:eval:stability} shows the average error over iterations for 100 independent random tests. This plot shows that our method is robust to noise in the input, even if the input maps can have  errors up to approximately 40\% of the shape radius. At the same time, our algorithm can efficiently filter out the noise within a small number of iterations. Note that in 94 cases out of 100 the refined maps converged to a nearly identical final result, while in the remaining 6, the refinement led to maps that are mixed with symmetric ambiguity since there is too much noise introduced into their initialization. 

\begin{figure}[t]
    \centering
%
%
\definecolor{mycolor1}{rgb}{0.00000,0.44700,0.74100}%
\definecolor{mycolor2}{rgb}{0.67843,0.92157,1.00000}%
\definecolor{mycolor3}{rgb}{0.92900,0.69400,0.12500}%
\definecolor{mycolor4}{rgb}{0.49400,0.18400,0.55600}%
\definecolor{mycolor5}{rgb}{0.46600,0.67400,0.18800}%
\definecolor{mycolor6}{rgb}{0.85098,0.32549,0.09804}%
\definecolor{mycolor7}{rgb}{0.50196,0.50196,0.50196}%
\pgfplotsset{
compat=1.11,
legend image code/.code={
\draw[mark repeat=2,mark phase=2]
plot coordinates {
(0cm,0cm)
(0.15cm,0cm)        
(0.2cm,0cm)         
};%
}
}
\begin{tikzpicture}

\begin{axis}[
width=0.4\linewidth,
height=0.32\linewidth,
at={(1.713in,1.457in)},
scale only axis,
every x tick label/.append style={font=\color{black}, font=\tiny},
every y tick label/.append style={font=\color{black}, font=\tiny},
xmin=1,
xmax=22,
xlabel style={font=\color{white!15!black}},
xlabel={\footnotesize \# iterations},
xlabel style={at={(0.5,-0.08)}},
ymin=0,
ymax=70,
xmajorgrids,
ymajorgrids,
axis background/.style={fill=white},
title style={font=\bfseries},
title style={at={(0.5,0.95)}},
title={\footnotesize Dirichlet Energy},
legend style={at={(1,0.7)}, anchor=south east, legend cell align=left, align=left, draw=white!15!black},
legend style={draw=none,legend columns=3, fill opacity=0, text opacity = 1, draw opacity=1}
]

\addplot [color=mycolor7, line width=1.5pt]
  table[row sep=crcr]{%
1	9.234095\\
2	9.234095\\
3	9.234095\\
4	9.234095\\
5	9.234095\\
6	9.234095\\
7	9.234095\\
8	9.234095\\
9	9.234095\\
10	9.234095\\
11	9.234095\\
12	9.234095\\
13	9.234095\\
14	9.234095\\
15	9.234095\\
16	9.234095\\
17	9.234095\\
18	9.234095\\
19	9.234095\\
20	9.234095\\
21	9.234095\\
22	9.234095\\
};
\addlegendentry{\scriptsize GT}


\addplot [color=mycolor2, line width=1.5pt]
  table[row sep=crcr]{%
1	19.61705\\
2	30.7174\\
3	29.2521\\
4	28.60255\\
5	28.0731\\
6	28.1012\\
7	28.2383\\
8	27.6892\\
9	27.704\\
10	27.55915\\
11	27.45485\\
12	27.37615\\
13	27.38735\\
14	27.29715\\
15	27.28515\\
16	27.3547\\
17	27.28245\\
18	27.2114\\
19	27.2198\\
20	27.2001\\
21	27.26\\
22	27.23\\
};
\addlegendentry{\scriptsize ICP$_{50}$}

\addplot [color=mycolor3, line width=1.5pt]
  table[row sep=crcr]{%
1	19.61705\\
2	44.078\\
3	38.2836\\
4	37.56885\\
5	36.93385\\
6	36.39585\\
7	36.0345\\
8	35.889\\
9	35.78565\\
10	35.5503\\
11	35.4409\\
12	35.29365\\
13	35.05985\\
14	35.07575\\
15	34.9254\\
16	34.70985\\
17	34.60335\\
18	34.5455\\
19	34.52965\\
20	34.44165\\
21	34.4044\\
22	34.36225\\
};
\addlegendentry{\scriptsize ICP$_{75}$}

\addplot [color=mycolor4, line width=1.5pt]
  table[row sep=crcr]{%
1	19.61705\\
2	60.66305\\
3	47.91225\\
4	46.3755\\
5	45.60785\\
6	45.3572\\
7	44.8914\\
8	44.79835\\
9	44.4314\\
10	44.50395\\
11	44.3098\\
12	44.0407\\
13	43.97615\\
14	43.92735\\
15	43.6349\\
16	43.51695\\
17	43.55735\\
18	43.4295\\
19	43.5116\\
20	43.5512\\
21	43.4273\\
22	43.411\\
};
\addlegendentry{\scriptsize ICP$_{100}$}

\addplot [color=mycolor5, line width=1.5pt]
  table[row sep=crcr]{%
1	19.61705\\
2	61.99795\\
3	51.26575\\
4	49.3123\\
5	48.3546\\
6	47.8477\\
7	47.6115\\
8	47.215\\
9	46.91845\\
10	47.03815\\
11	47.00475\\
12	46.78965\\
13	46.69405\\
14	46.5454\\
15	46.5071\\
16	46.61495\\
17	46.50155\\
18	46.49035\\
19	46.499\\
20	46.42755\\
21	46.25965\\
22	46.1685\\
};
\addlegendentry{\scriptsize ICP$_{120}$}

\addplot [color=mycolor6, line width=1.5pt]
  table[row sep=crcr]{%
1	19.61705\\
2	22.43201\\
3	20.405745\\
4	15.24221\\
5	14.22575\\
6	12.197095\\
7	12.26843\\
8	11.736815\\
9	12.01943\\
10	11.62895\\
11	11.704255\\
12	11.122685\\
13	11.030515\\
14	10.88188\\
15	10.88566\\
16	10.56237\\
17	10.61529\\
18	10.24778\\
19	10.10915\\
20	9.68848\\
21	9.621165\\
22	9.32935\\
};
\addlegendentry{\scriptsize \textbf{Ours}}

\end{axis}
\end{tikzpicture}%
    \begin{overpic}[trim=8cm -6cm 5cm 0cm,clip,width=0.48\linewidth,grid=flase]{./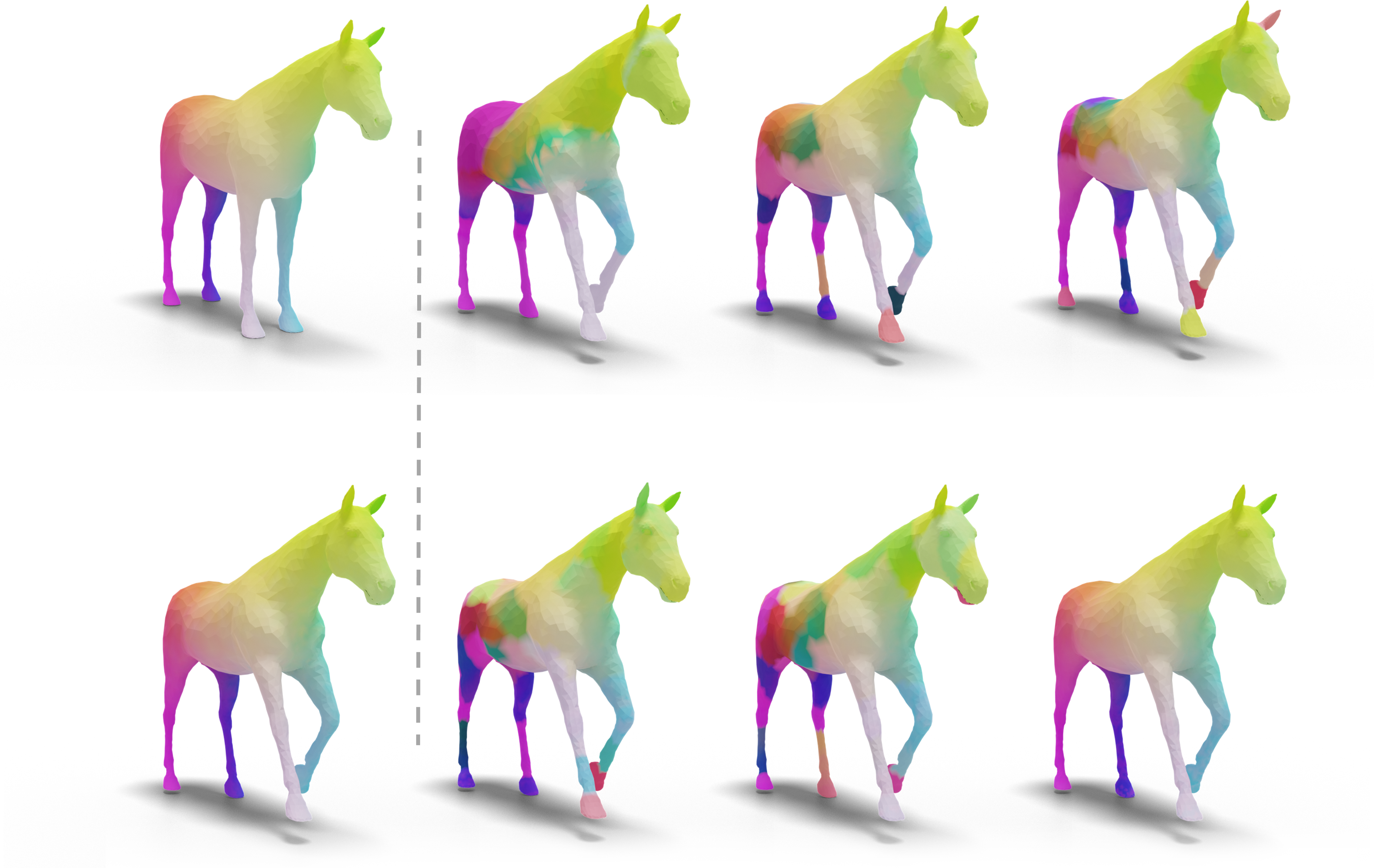}
    \put(8,85){\scriptsize{Source}}
    \put(34,85){\scriptsize{Ini}}
    \put(58,85){\scriptsize{ICP$_{50}$}}
    \put(82,85){\scriptsize{ICP$_{75}$}}
    \put(9,42){\scriptsize{GT}}
    \put(32,42){\scriptsize{ICP$_{100}$}}
    \put(57,42){\scriptsize{ICP$_{120}$}}
    \put(82,42){\scriptsize{\textbf{Ours}}}
    \end{overpic}
    \vspace{-0.2cm}
    \caption{Average Dirichlet energy of pointwise maps on 20 TOSCA pairs, starting with a computed $20 \times 20$ functional map, refined either using ICP in different dimensions or \name\ until $120 \times 120$. Our method converges to a smoother map, with Dirichlet energy closer to the ground truth.
    }
\label{fig:res:eval:smoothness}
\vspace{-0.2cm}
\end{figure}

\subsubsection{Smoothness}
The maps refined with our method are typically very smooth, although this constraint is not enforced explicitly. Figure~\ref{fig:res:eval:smoothness} shows a quantitative measurement of the smoothness compared to ICP with different dimensions on 20 pairs of shapes from the TOSCA dataset \cite{TOSCA}, starting with a $20 \times 20$ functional map computed via \cite{nogneng17}. Map smoothness is measured as the mean Dirichlet energy of the normalized coordinates of the target shape mapped on the source through the given point-to-point map. Our method clearly provides smoother maps, and approaches the ground truth after a few iterations.

\subsection{Practical Applications}\label{sec:apps}
We applied our method 
across a range of application scenarios, including symmetry detection, map refinement among complete shapes, partial matching and function transfer. In each application we demonstrate a quantitative improvement as well as a significant speedup compared to the best competing method.
\revised{Note that in all experiments, we use the same initialization for all competing methods to guarantee a fair comparison.}
 %
 %

Unless otherwise stated, ICP uses the same dimension as the output dimension of \name. 
``Ours'' refers to applying {\name} on the complete meshes, while ``Ours$^*$'' refers to {\name} with sub-sampling for acceleration. In both cases, we always output dense correspondences between complete meshes.
 To measure the accuracy, we only accept {\em direct} ground-truth maps (except for the symmetry detection application, where the symmetric ground-truth maps are considered).
For texture transfer, we first convert the point-wise map to a functional map with size 300$\times$300, then we use this functional map to transfer the uv-coordinates from source to target. 

\begin{figure}[!t]
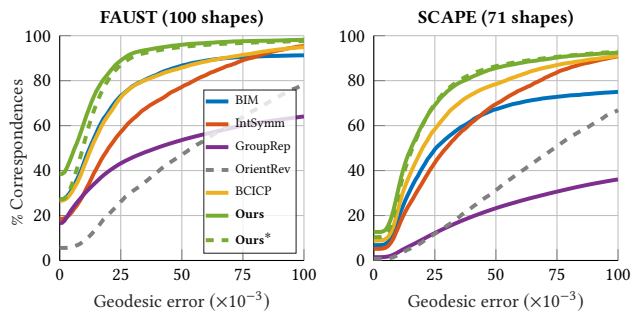

\centering
\input{./figures2/Err_symmMap_3baselines_FAUST.tex}
\input{./figures2/Err_symmMap_3baselines_SCAPE.tex}
\vspace{-0.3cm}
\caption{Error summary of symmetry detection. We compare with the recent state-of-the-art methods IntSymm \cite{Nagar_2018_ECCV} and GroupRep \cite{wang2017group}, as well as to the baseline Blended Intrinsic Maps (BIM) \cite{kim2011blended} and BCICP.}
\label{fig:res:symm:summary}
\end{figure}

\begin{table}[!t]
\caption{\textbf{Symmetry Detection}. Given approximate symmetric maps (OrientRev~\cite{ren2018continuous}), we refine them using our method or BCICP, and compare the results to several state-of-the-art methods, including BIM, IntSymm, and GroupRep. Here we report the average error and runtime over 100 FAUST shapes and 71 SCAPE shapes. We also include the results of our method with sub-sampling for acceleration (called Ours*).\vspace{-1mm}}
\label{tb:res:symm}
\centering
\footnotesize
\begin{tabular}{|c|c|cc|cc|}
\hline
\multicolumn{2}{|c|}{Measurement} & \multicolumn{2}{c|}{Average Error ($\times 10^{-3}$)} & \multicolumn{2}{c|}{Average Runtime (s)} \\ \hline
\multicolumn{2}{|c|}{Method \textbackslash~Dataset} & FAUST & SCAPE & FAUST & SCAPE \\ \hline
\multicolumn{2}{|l|}{BIM~\scriptsize{\cite{kim2011blended}}} & 65.4 & 133 & 34.6 & 41.7 \\
\multicolumn{2}{|l|}{GroupRep~\scriptsize{\cite{wang2017group}}} & 224 & 347 & 8.48 & 16.7 \\
\multicolumn{2}{|l|}{IntSymm~\scriptsize{\cite{Nagar_2018_ECCV}}} & 33.9 & 60.3 & 1.35 & 1.81 \\
\multicolumn{2}{|l|}{OrientRev (Ini) ~\scriptsize{\cite{ren2018continuous}} } & 68.0 & 110 & 0.59 & 1.07 \\
\multicolumn{2}{|l|}{Ini + BCICP~\scriptsize{\cite{ren2018continuous}}} & 29.2 & 49.7 & 195.1 & 525.6 \\ \hline
\multicolumn{2}{|c|}{\textbf{Ini + Ours}} & \textbf{16.1} & \textbf{46.2} & 22.6 & 62.7 \\
\multicolumn{2}{|c|}{\textbf{Ini + Ours*}} & 18.5 & 46.6 & \textbf{1.78} & \textbf{3.66} \\ \hline
\multirow{2}{*}{\begin{tabular}[c]{@{}c@{}}\scriptsize{\textbf{Improv}. w.r.t}\\ \scriptsize{state-of-the-art}\end{tabular}} & Ini + Ours & \textbf{44.9}\% & \textbf{7.0}\% & 8$\times$ & $8\times$ \\
 & Ini + Ours* & 36.6\% & 6.2\% & \textbf{110}$\times$ & \textbf{140}$\times$ \\ \hline
\end{tabular}
\end{table}

\subsubsection{Symmetry Detection}
We first apply our approach for computing pose-invariant
symmetries. This problem has received a lot of attention in the past and here we compare to the  most recent and  widely used  techniques. In this application we
are only given a single shape and our goal is to compute a high-quality intrinsic symmetry, such as
the left-right symmetry present in humans. This problem is slightly different from the pairwise matching scenario, since the identity solution must be ruled out. We do so by
leveraging a recent approach for encoding map orientation in functional map computations
\cite{ren2018continuous}. Namely, we compute an initial $10\times 10$ functional map by solving an
optimization problem with exactly the same parameters as in \cite{ren2018continuous} and WKS descriptors as input, but instead of
orientation-preserving, we promote orientation-{\em reversing} maps. This gives us an initial functional map which we then upsample to size $100 \times 100$. 
 Figure \ref{fig:res:symm:summary} shows the error curves on the SCAPE \cite{Anguelov05} and FAUST benchmarks (for which we have the ground truth symmetry map), while Table~\ref{tb:res:symm} reports the average error and runtime. 
 Note that the shapes in both datasets are not meshed in a symmetric way, so a successful method must be able to handle, often significant, changes in mesh structure.

Our approach 
 achieves a significant quality improvement compared to all state-of-the-art methods, and is also significantly faster. With acceleration, we achieve a speedup of more than 100x on a workstation with a 3.10GHz CPU and 64GB RAM.  Figure \ref{fig:res:symm:eg_scape_faust} further shows a qualitative comparison. Finally, we remark that for human shapes the first four Laplacian
eigenfunctions follow the same structure disambiguating top-bottom and left-right. Therefore we can
use a fixed $4\times 4$ diagonal functional map with entries $1, 1,-1,-1$ as an initial guess for human symmetry detection. Results with this initialization are shown in the supplementary materials. 

\begin{figure}[!t]
\centering
  \begin{overpic}
  [trim=6cm 0cm 10cm 0cm,clip,width=1\columnwidth,grid=false]{./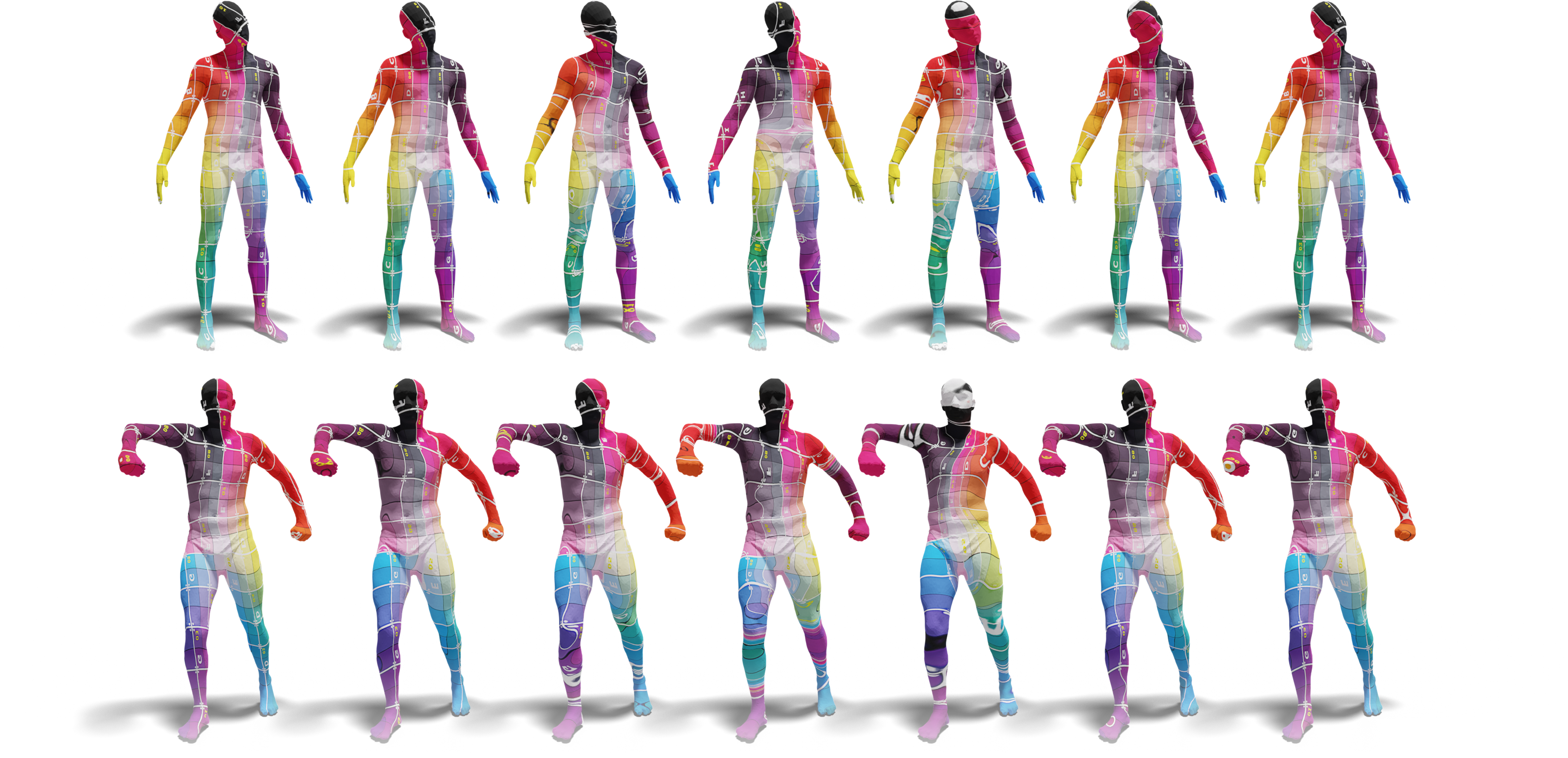}
  \put(5,56){\scriptsize{Ground-truth}}
  \put(23,56){\scriptsize{BIM}}
  \put(33,56){\scriptsize{IntSymm}}
  \put(45,56){\scriptsize{GroupRep}}
  \put(59,56){\scriptsize{OrientRev}}
  \put(75,56){\scriptsize{BCICP}}
  \put(89,56){\scriptsize{\textbf{Ours}}}
  \end{overpic}
  \vspace{-0.6cm}
  \caption{\label{fig:res:symm:eg_scape_faust}Symmetry detection. We show two examples with FAUST (first row) and SCAPE
    (second row) and visualize the symmetric maps from different methods via texture transfer. Note that our method with
    acceleration is over 100$\times$ faster than BCICP, while achieving comparable or better quality.}
  \vspace{-0.2cm}
\end{figure}

\subsubsection{Refinement for shape matching.}
We applied our technique to refine maps between pairs of shapes and compared our method with recent state-of-the-art refinement techniques, including RHM~\cite{ezuz2018reversible}, PMF~\cite{vestner2017product}, BCICP~\cite{ren2018continuous}, \revised{Deblur~{\cite{ezuz2017deblurring}}}, as well as the standard refinement ICP~\cite{ovsjanikov2012functional}.

For each dataset (FAUST and SCAPE), we consider three different versions. (1) Original: where 
all the meshes have the same triangulation. (2) Remeshed: we randomly flipped $12.5\%$ of the edges (using gptoolbox \cite{gptoolbox}) keeping the vertex positions unchanged to maintain a perfect ground-truth. (3) Remeshed + Resampled (called "Resampled" in Table~\ref{tb:res:fasut_summary}): we use the datasets provided in~\cite{ren2018continuous}, where each shape
is remeshed and resampled independently, having different number of vertices (around 5k) and often significantly different triangulation.
As such, these are more challenging than the original datasets on which near-perfect results have been reported in the past. Figure~\ref{fig:res:faust:diff_tri} shows a FAUST shape in the three versions.


\begin{figure}
\centering
\begin{minipage}{0.5\linewidth}\centering
    \begin{overpic}[trim=9cm 1cm 3cm 0cm,clip,width=0.9\linewidth,grid=false]{./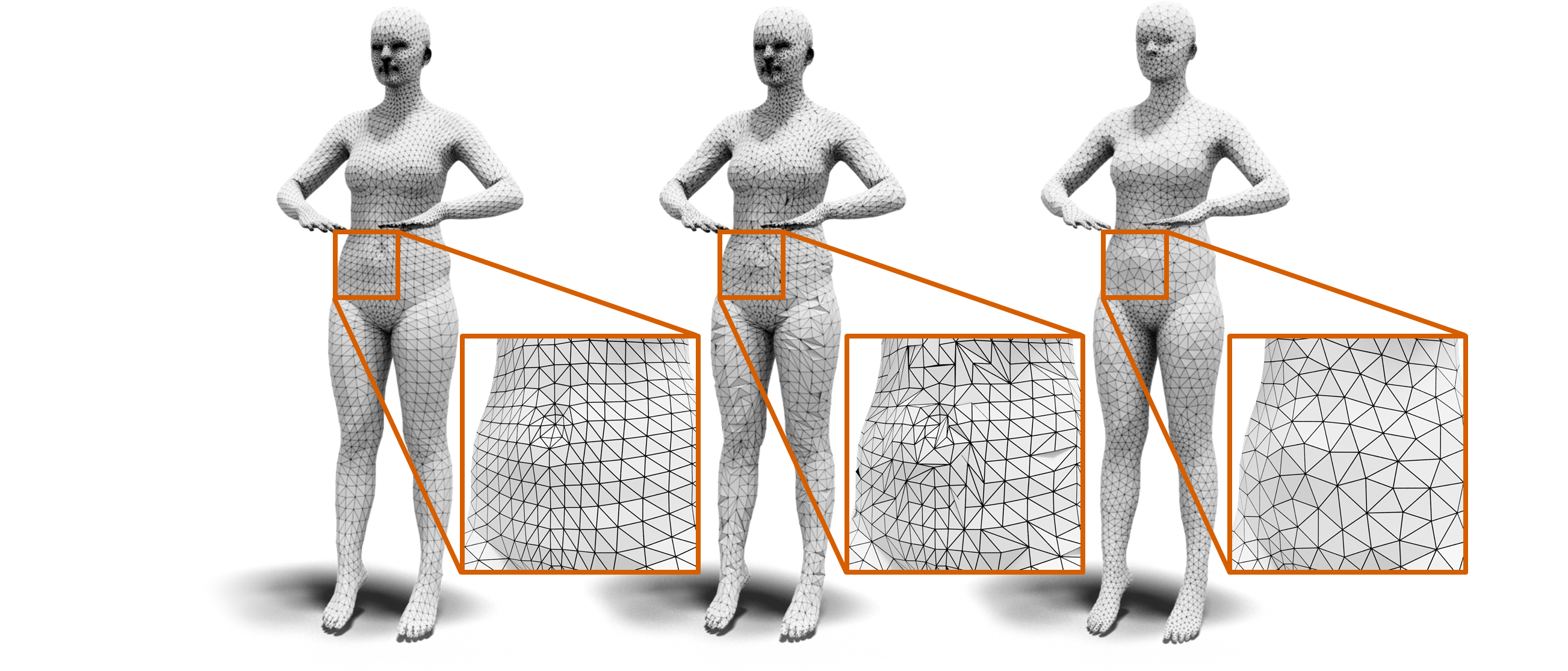}
    \put(7,53){\scriptsize{Original}}
    \put(36,53){\scriptsize{Remeshed}}
    \put(67,55.8){\scriptsize{Remeshed}}
    \put(63,50.8){\scriptsize{+ Resampled}}
    \end{overpic}
    \vspace{-0.1cm}
    \caption{Different triangulation}
    \label{fig:res:faust:diff_tri}
\end{minipage}\hfill\hspace{-0.2cm}
\begin{minipage}{0.51\linewidth}\centering
    \begin{overpic}[trim=10cm 2cm 15cm 0cm,clip,width=1\linewidth,grid=false]{./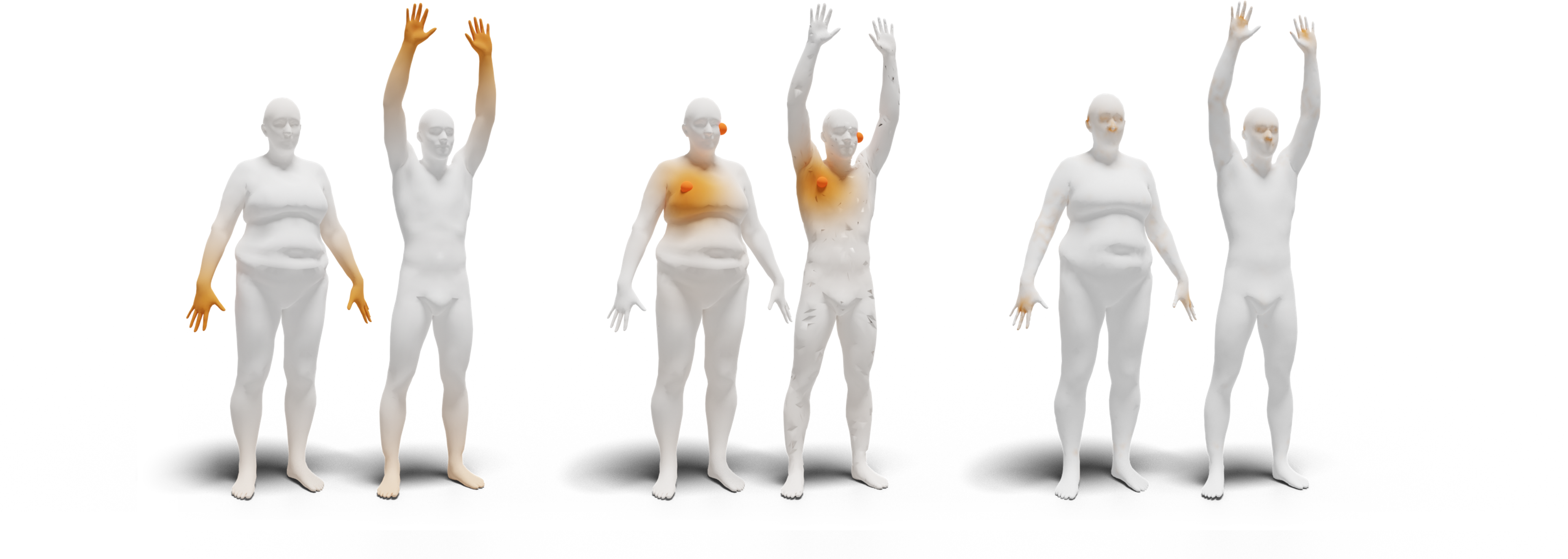}
    \put(6,46){\scriptsize{WKS desc}}
    \put(38,46){\scriptsize{2 landmarks}}
    \put(74,46){\scriptsize{Neural SHOT}}
    \end{overpic}
    \vspace{-0.6cm}
    \caption{Different descriptors}
    \label{fig:res:faust_desc}
\end{minipage}
\vspace{-0.1cm}
\end{figure}

To demonstrate that our algorithm works with different
initializations, we use three different types of descriptors to
compute the initial functional maps (with size $20\times20$) for the
three datasets: (1) WKS; (2) descriptors derived from two landmarks (see the
two spheres highlighted in the middle of
Figure~\ref{fig:res:faust_desc}); (3) Learned SHOT
descriptors~\cite{roufosse2018unsupervised}: the descriptors computed
by a non-linear transformation of SHOT, using an unsupervised deep
learning method trained on a mixed subset of the remeshed and
resampled SCAPE and FAUST dataset. For the experiments with WKS descriptors, 
we also use the orientation-preserving operators~\cite{ren2018continuous} to disambiguate
the symmetry of the WKS descriptors.

\setlength{\tabcolsep}{0.25em}
\begin{table}[!t]
\caption{\textbf{Quantitative evaluation of refinement for shape matching}. The Original and Remeshed datasets include 300 shape pairs. The Resampled dataset includes 190 FAUST pairs and 153 SCAPE pairs.}
\vspace{-0.1cm}
\label{tb:res:fasut_summary}
\centering
\footnotesize
\begin{tabular}{|cc|ccc|ccc|}
\hline
 &  & \multicolumn{3}{c|}{Average Error ($\times 10^{-3}$)} & \multicolumn{3}{c|}{Average Runtime (s)} \\ \hline
\multicolumn{2}{|c|}{Method \textbackslash~Dataset} & \scriptsize{Original} & \scriptsize{Remeshed} & \scriptsize{ Resampled} & \scriptsize{Original} & \scriptsize{Remeshed} & \scriptsize{Resampled} \\ \hline
\multicolumn{2}{|c|}{Ini} & 67.3 & 44.0 & 46.5 & - & - & - \\ \hline
\multicolumn{2}{|c|}{ICP} & 54.0 & 36.3 & 29.3 & 10.2 & 10.1 & 5.32 \\
\multicolumn{2}{|c|}{\revised{Deblur}} & \revised{61.9} & \revised{38.6} & \revised{44.4} & \revised{10.9}  & \revised{11.7} & \revised{10.4} \\
\multicolumn{2}{|c|}{RHM} & 41.9 & 33.3 & 32 & 41.4 & 42.5 & 47.4 \\
\multicolumn{2}{|c|}{PMF} & 26.4 & 25.9 & 86.4 & 736.5 & 780.2 & 311.5 \\
\multicolumn{2}{|c|}{BCICP} & 21.6 & 19.5 & 26 & 183.7 & 117.8 & 364.2 \\ \hline
\multicolumn{2}{|c|}{\textbf{Ours}} & \textbf{15.8} & \textbf{13.3} & \textbf{21.7} & 9.60 & 9.64 & 6.49 \\
\multicolumn{2}{|c|}{\textbf{Ours*}} & 17.5 & 14.5 & 24.6 & \textbf{1.14} & \textbf{1.15} & \textbf{0.68} \\ \hline
\multicolumn{1}{|c|}{\multirow{2}{*}{\textbf{Improv.}}} & Ours &\textbf{ 26.9}\% & \textbf{31.8}\% & \textbf{16.5}\% & 19$\times$ & 12$\times$ & 56$\times$ \\
\multicolumn{1}{|c|}{} & Ours* & 19.0\% & 25.6\% & 5.4\% & \textbf{160}$\times$ & \textbf{100}$\times$ & \textbf{535}$\times$ \\ \hline
\end{tabular}
\vspace{-0.2cm}
\end{table}

Table~\ref{tb:res:fasut_summary} reports the average error and runtime, 
 while the corresponding summary curves are in the supplementary materials. Figure~\ref{fig:res:eg:faust_refinement} shows a qualitative example.
Our method without acceleration achieves 26.9\%, 31.8\%, and 16.5\%
improvement in accuracy over the state-of-the-art while being 10
to 50 times faster. With acceleration, our method is more than
100-500$\times$ faster than the top existing method while
still producing accuracy improvement. Our method is also much simpler than BCICP (see Appendix~\ref{sec:appendix:code} for an overview of the source code of BCICP and our method). Interestingly, we also note that the method in \cite{roufosse2018unsupervised} overfits severely when trained directly on functional maps of size 120 and results in an average error of 97.5. In contrast, training on smaller functional maps and using our upsampling leads to average error of 21.7. Please see the supplementary for an illustration.
\revised{
We provide evaluation of other quantitative measurements such as bijectivity, coverage, and edge distortion in Appendix~{\ref{sec:appendix:measurements}}.
We also provide additional qualitative examples and comparison to the Deblur method on non-isometric shapes in Appendix~{\ref{sec:appendix:deblur}}.
}


\subsubsection{Matching different high-resolution meshes}
SHREC19 \cite{SHREC19} is a recent benchmark composed of 430 human pairs with different connectivity and mesh resolution, gathered using 44 different shapes from 11 datasets. Each shape is aligned to the SMPL
model~\cite{SMPL} using the registration pipeline of~\cite{FARM}, thus providing a dense ground truth for quantitative evaluation.
 This benchmark is challenging due to high shape variance and due to the presence of
high-resolution meshes (5K to 200K vertices, see supplementary materials for examples).  In
Table~\ref{tb:res:shrec19} we report full comparisons in terms of average error and runtime.

Since BCICP and PMF require a full geodesic distance matrix as input, we
apply them on simplified shapes (we used MATLAB's \texttt{reducepatch} for the remeshing). 
 The refined maps are then propagated back to the original meshes via nearest neighbors; please see the supplementary materials for more details.

We initialize \name\ with the $20 \times 20$ functional map provided as baseline in \cite{SHREC19}, and upsample this map to size $120 \times 120$ with a step of size $5$.
Our method achieves the best results while being over 290$\times$ faster.
We also highlight that although we have a similar accuracy as
BCICP, we better preserve the local details as shown in
Figure~\ref{fig:res:shrec:bcicp_vs_zoomOut}, since we avoid the mesh
simplification and map transfer steps.

In the supplementary materials, we further compare to methods that are applicable on full-resolution meshes directly. The experiment is conducted on a subset of SHREC19 and our method achieves a significant improvement in accuracy.


\setlength{\tabcolsep}{0.3em}
\begin{table}[!t]
\caption{\textbf{SHREC19 summary}. We compare with the refinement techniques RHM, PMF, BCICP and the baseline ICP on 430 shape pairs. We report an accuracy improvement over BCICP (the top performing method on this benchmark), and a significant gap in runtime performance over all methods.}
\label{tb:res:shrec19}
\vspace{-0.1cm}
\centering
\footnotesize
\begin{tabular}{|c|c|c|c|c|}
\hline
\multicolumn{2}{|c|}{\multirow{2}{*}{Method}} & \multirow{2}{*}{\#samples} & \multicolumn{2}{c|}{Measurement} \\ \cline{4-5} 
\multicolumn{2}{|c|}{} &  & Avg. Error ($\times 10^{-3}$) & Avg. Runtime (s) \\ \hline
\multicolumn{2}{|c|}{Initialization} & - & 60.4 & - \\ \hline
\multicolumn{2}{|c|}{ICP} & - & 47.0 & 87.3 \\ \hline
\multicolumn{2}{|c|}{\revised{Deblur}} & - & \revised{55.4} & \revised{102.1} \\ \hline
\multicolumn{2}{|c|}{RHM} & - & 42.6 & 2313 \\ \hline
\multicolumn{2}{|c|}{\multirow{3}{*}{PMF}} & 500 & 56.2 & 72.9 \\
\multicolumn{2}{|c|}{} & 1000 & 51.8 & 118.1 \\
\multicolumn{2}{|c|}{} & 5000 & 83.2 & 349.3 \\ \hline
\multicolumn{2}{|c|}{\multirow{3}{*}{BCICP}} & 500 & 40.7 & 90.0 \\
\multicolumn{2}{|c|}{} & 1000 & 33.6 & 163.7 \\
\multicolumn{2}{|c|}{} & 5000 & 30.1 & 437.9 \\ \hline
\multicolumn{2}{|c|}{\textbf{Ours*}} & \textbf{500 }& \textbf{28.8} &\textbf{ 1.5} \\ \hline
\textbf{Improv.} & Ours* & 500 & \textbf{4}\% &\textbf{290}$\times$ \\ \hline
\end{tabular}
\end{table}

\begin{figure}[!t]
\centering
  \begin{overpic}
  [trim=1cm 0cm 0cm 0cm,clip,width=1\columnwidth,grid=false]{./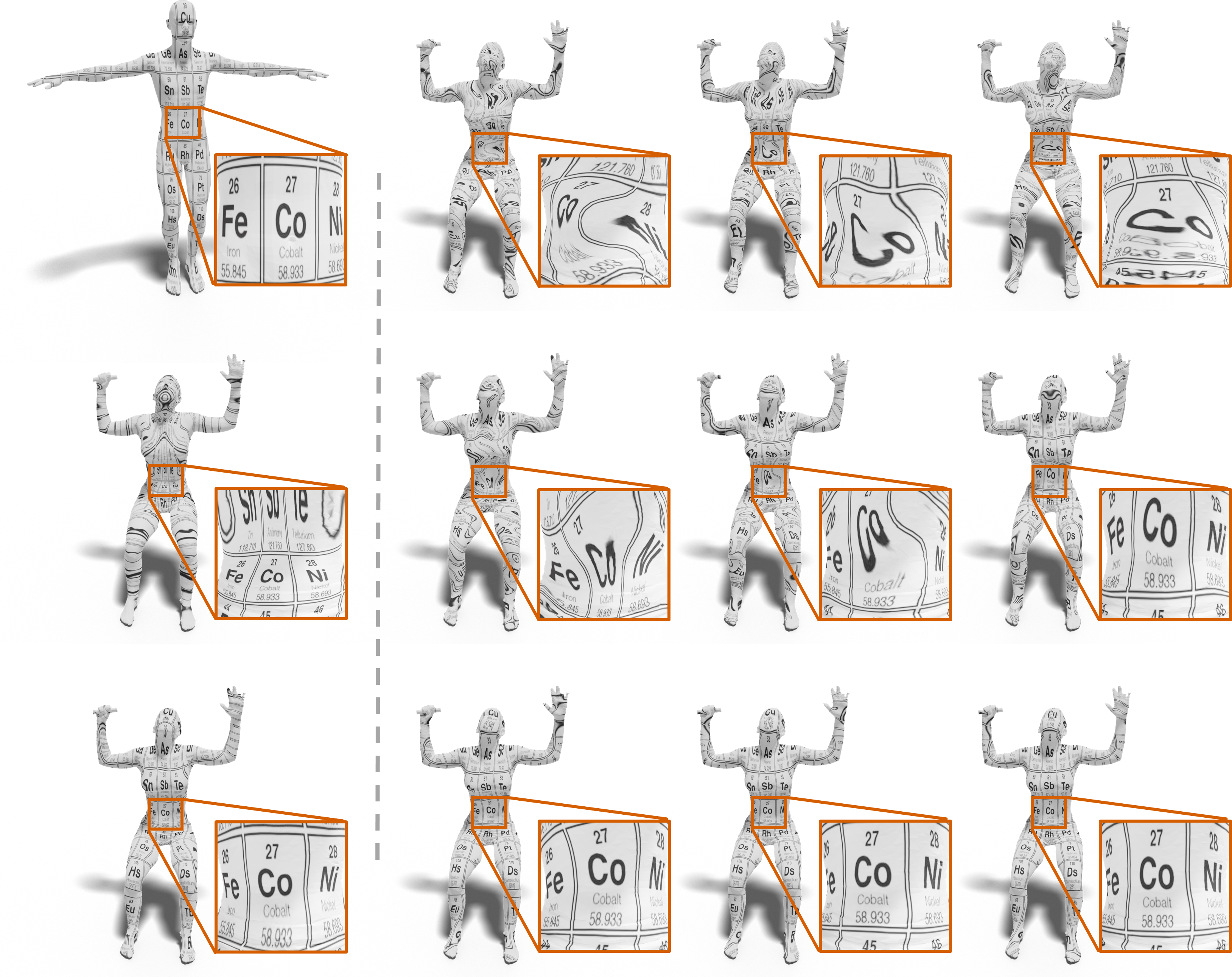}
  \put(5,81){\footnotesize{Source \scriptsize{($n=53K$)}}}
  \put(34,81){\footnotesize{PMF (500)}}
  \put(57,81){\footnotesize{PMF (1000)}}
  \put(80,81){\footnotesize{PMF (5000)}}
  
  \put(8,52){\footnotesize{Initialization}}
  \put(33,52){\footnotesize{BCICP (500)}}
  \put(56,52){\footnotesize{BCICP (1000)}}
  \put(79,52){\footnotesize{BCICP (5000)}}
  
  \put(3,25){\footnotesize{Reference \scriptsize{($n=16K$)}}}
  \put(33,25){\footnotesize{Ours (500)}}
  \put(56,25){\footnotesize{Ours (1000)}}
  \put(79,25){\footnotesize{Ours (5000)}}
  \end{overpic}
  \vspace{-0.5cm}
   \caption{Different sampling density. Here we show an example from the SHREC19 benchmark on a pair of shape with 53K and 16K vertices respectively. We compare with PMF and BCICP under different sampling density (500, 1000, and 5000 samples). The computed maps are visualized via texture transfer. 
    Our method achieves the best global accuracy while preserving the local details at the same time. Further, our method is much less dependent on the sampling density than BCICP or PMF.}
\label{fig:res:shrec:bcicp_vs_zoomOut}
\end{figure}

\setlength{\tabcolsep}{0.3em}
\begin{table}[!t]
\caption{\label{tab:pcl}\textbf{Quantitative evaluation on point cloud surfaces.} Our method is both more accurate and faster than ICP on average.}
\vspace{-0.1cm}
\centering
\footnotesize
\begin{tabular}{|c|c|ccc|c|c|}
\hline
\multirow{2}{*}{Measurement \textbackslash~Method} & \multirow{2}{*}{Ini} & \multirow{2}{*}{ICP} & \multirow{2}{*}{ICP$_{20}$} & \multirow{2}{*}{ICP$_{120}$} & \multirow{2}{*}{\textbf{Ours*}} & \textbf{Improv.} \\ \cline{7-7} 
 &  &  &  &  &  & Ours \\ \hline
Average Error ($\times 10^{-3}$) & 51.0 & 49.7 & 31.4 & 36.9 & \textbf{22.3} & 29.0\% \\ \hline
Average Runtime (s) & - & 29.6 & 8.3 & 305.2  & \textbf{4.0} & 2$\times$\\ \hline
\end{tabular}
\end{table}

\begin{figure}[!t]
\centering
\input{figures2/figures_simone/FAUST_TOSCA_x10_pairs.tikz}
\begin{overpic}[trim=0cm 0cm 0cm 0cm,clip,width=\columnwidth,grid=false]{./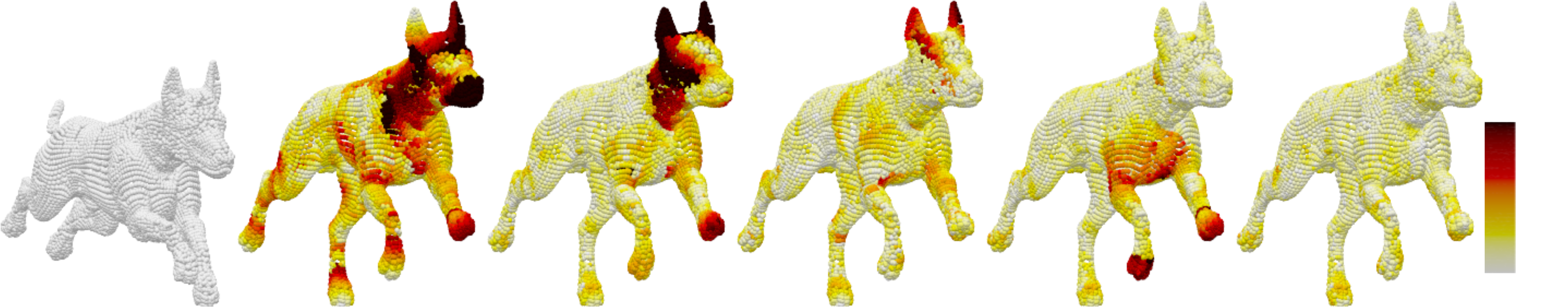}
\put(3,17.3){\footnotesize Source}
\put(20,17.3){\footnotesize Ini}
\put(36,17.3){\footnotesize ICP}
\put(50,17.3){\footnotesize ICP$_{20}$}
\put(65.6,17.3){\footnotesize ICP$_{120}$}
\put(83,17.3){\footnotesize \textbf{Ours}}
\put(97,10.3){\tiny $0.1$}
\put(97,1.3){\tiny 0}
\end{overpic}
\vspace{-0.5cm}
\caption{\label{fig:FAUST_PC}Results on non-rigid point cloud surfaces. We tested on 10 FAUST pairs and 10 TOSCA pairs. Below, we visualize geodesic error directly on the point clouds, defined as the Euclidean distance between the estimated matches and the ground truth. The heatmap grows from white (zero error) to dark red ($\ge 10\%$ deviation from ground truth).}
\vspace{-0.1cm}
\end{figure}
%

\subsubsection{Point cloud surfaces.}
Several standard methods for meshes typically fail when applied to point clouds. We tested our approach on point clouds generated from the FAUST and TOSCA datasets, by sampling points within mesh triangles uniformly at random.
We estimate the Laplace operator on point clouds as proposed in \cite{Belkin09}.
The initial $20 \times 20$ functional map is estimated with the approach of~\cite{nogneng17}, using WKS and 2 landmarks (Ini). We then upsample from 20 to 120 with steps of size 5, and compare with ICP, ICP$_{20}$ and ICP$_{120}$.
Quantitative and qualitative results are shown in Table~\ref{tab:pcl} and Figure~\ref{fig:FAUST_PC}.

\begin{figure}[tb]
  \centering
  \begin{overpic}
  [trim=0cm 0cm 0cm 0cm,clip,width=0.99\linewidth]{./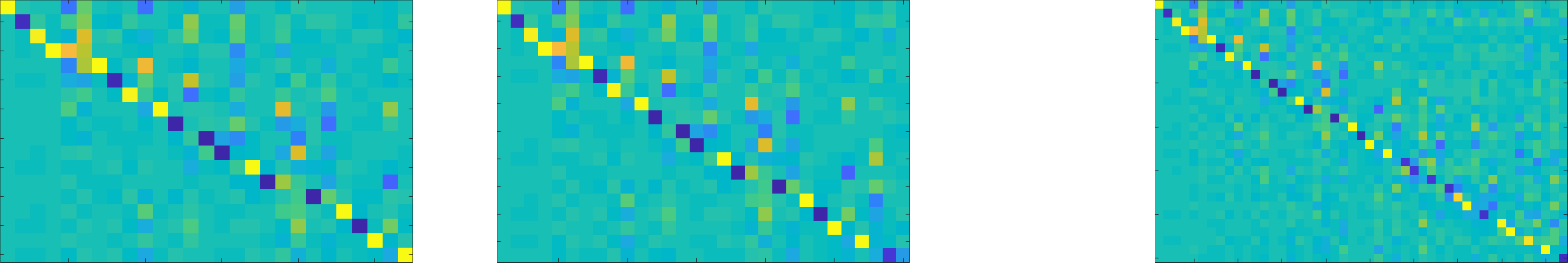}
  \put(28,7){\footnotesize $\to$}
  \put(9,-2.3){\footnotesize $18\times 27$}
  %
  \put(60,7){\footnotesize $\to \cdots \to$}
  \put(40.5,-2.3){\footnotesize $19\times 30$}
  %
  \put(83,-2.3){\footnotesize $30\times 47$}
  \end{overpic}
  \caption{\label{fig:rank}Partial matching involves functional maps $\C$ with slanted diagonal. To account for this particular structure, we iteratively increase the two dimensions of $\C$ by different amounts, see Equations~\eqref{eq:upd_km}-\eqref{eq:upd_kn}. This allows correct upsampling, as shown in this example.}
\vspace{-0.1cm}
\end{figure}
\begin{figure}[t]
  \centering
  \input{./figures/partial_cuts.tex}
  \input{./figures/partial_holes.tex}
  \begin{overpic}
  [trim=0cm 0cm 0cm 0cm,clip,width=\linewidth]{./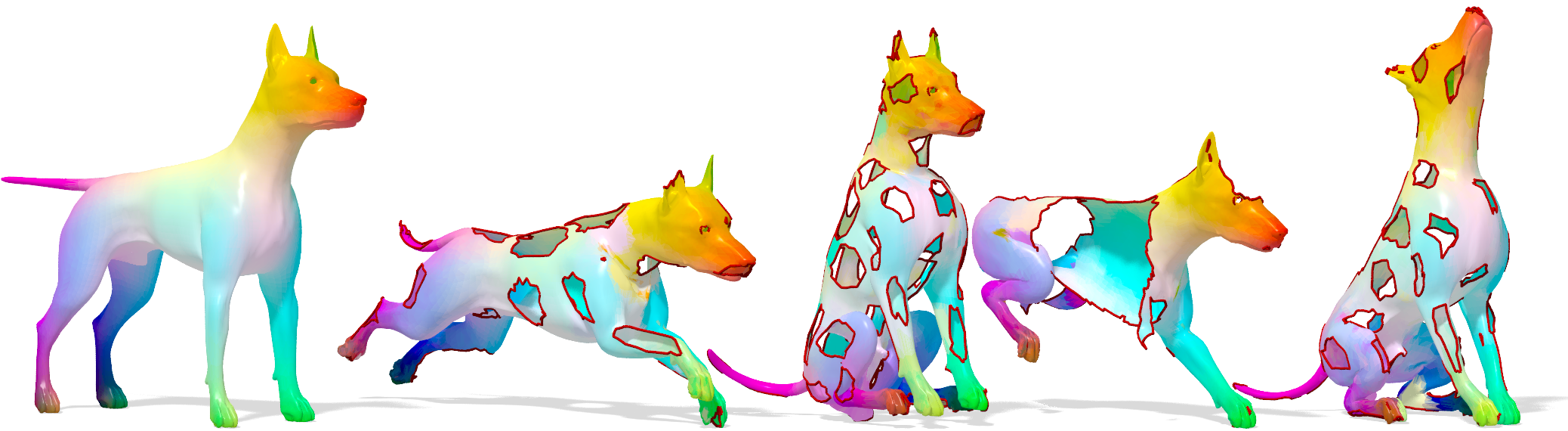}
  \put(6,19){\footnotesize Source}
  \end{overpic}
  \vspace{-0.5cm}
  \caption{\label{fig:partial}{\em Top}: Comparisons on the SHREC'16 Partiality benchmark with the state of the art method Partial Functional Maps (PFM) \cite{rodola2017partial} and with the Random Forests (RF) baseline \cite{rodola2014dense}. Average runtimes are 6sec for our method and 70sec for PFM, both initialized with a $4\times 4$ ground truth $\C$. {\em Bottom}: Qualitative results on the dog class.}
  \vspace{-0.2cm}
\end{figure}

\subsubsection{Partial Matching}
%
A particularly challenging setting of shape correspondence occurs whenever one of the two shapes has missing geometry. 
In~\cite{rodola2017partial} it was shown that, in case of partial isometries, the functional map matrix $\C$ 
has a ``slanted diagonal'' with slope proportional to the area ratio $\frac{A(\N)}{A(\M)}$ (here, $\M$ is a partial shape and $\N$ is a complete shape). 
Our spectral upsampling method can still be applied in this setting. To do so, we {\em weakly} enforce the expectation of a slanted diagonal by allowing rectangular $\C$. Namely, we define the update rules for the step size as follows:
\begin{align}
    k_\M &\mapsto k_\M+1 \label{eq:upd_km}\\
    k_\N &\mapsto k_\N + 1+ \lceil \frac{k_\N}{100}(100-r) \rceil \label{eq:upd_kn}
\end{align}
where $r$ is an estimate for $\mathrm{rank}(\C)$ obtained via the formula $r = \max_{i=1}^{k_\M} \{  i \; | \; \lambda_i^\M < \max_{j=1}^{k_\N} \lambda_j^\N \}$  after setting $k_\M=k_\N=100$ (see \cite[Eq. 9]{rodola2017partial} for details). 
%
%
%
In the classical case where both $\M$ and $\N$ are full and nearly isometric, the estimate boils down to $r = \min\{k_\M,k_\N\}=100$ 
%
and Eq.~\eqref{eq:upd_kn} reduces to $k_\N \mapsto k_\N + 1$; see Figure~\ref{fig:rank} for an illustration.

For these tests we adopt the SHREC'16 Partial Correspondence benchmark~\cite{shrec16partial}, consisting of 8 shape classes (humans and animals) undergoing partiality transformations of two kinds: regular `cuts' and irregular `holes'. All shapes are additionally resampled independently to $\sim\!10$K vertices. Evaluation is performed over 200 shape pairs in total, where each partial shape is matched to a full template of the corresponding class. Quantitative and qualitative results are reported in Figure~\ref{fig:partial}.

\subsubsection{Topological Noise}
%
We further explored the case of topological changes in the areas of self-contact (e.g., touching hands generating a geodesic shortcut). For this task, we compare with the state of the art on the SHREC'16 Topology benchmark~\cite{shrec16topology} (low-res challenge), consisting of 25 shape pairs ($\sim\!12$K vertices) undergoing nearly isometric deformations with severe topological artifacts. 
We initialize our method with a $30\times 30$ matrix $\C$ estimated via standard least squares with SHOT descriptors~\cite{shot}. 
 Since self-contact often leads to partiality, we use the rectangular update rules~\eqref{eq:upd_km}-\eqref{eq:upd_kn}.
Results are reported in Figure~\ref{fig:topology}. \revised{Figure~\ref{fig:res:topology_eg} shows some example maps computed using our method.}

\begin{figure}[bt]
  \centering
  \input{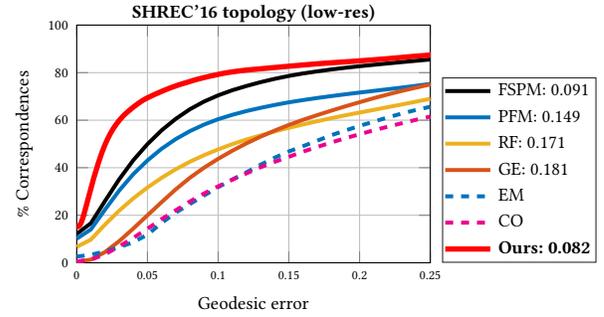}
  \vspace{-0.3cm}
  \caption{\label{fig:topology}Comparisons on the SHREC'16 Topology benchmark. Competing methods include PFM, RF, Green's Embedding (GE)~\cite{burghard2017embedding}, Expectation Maximization (EM)~\cite{sahilliouglu2012minimum}, Convex Optimization (CO)~\cite{Koltun}, and Fully Spectral Partial Matching (FSPM)~\cite{litany2017fully}. Dashed curves indicate sparse methods.}
\end{figure}

\begin{figure}[tb]
  \vspace{-0.3cm}
  \centering
  \begin{overpic}
  [trim=0cm 0cm 0cm 0cm,clip,width=1\linewidth, grid=false]{./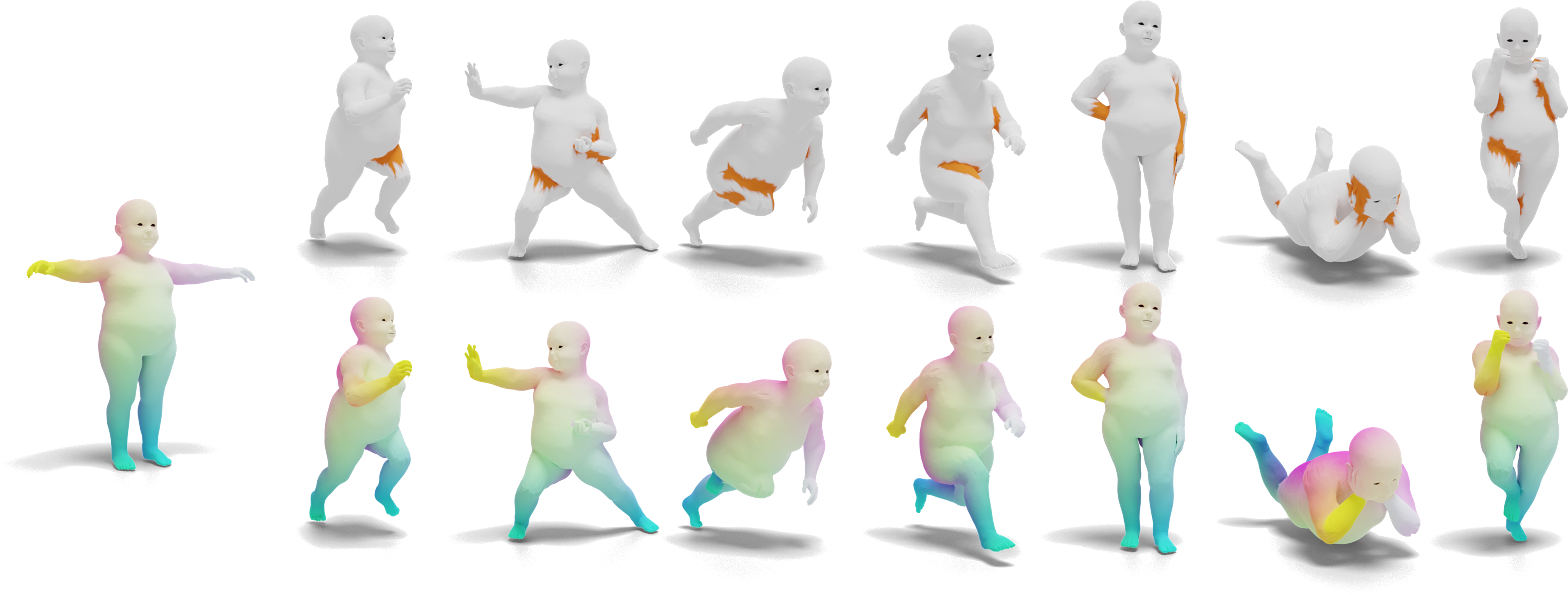}
  \put(5,27){\footnotesize Source}
  \end{overpic}
  \vspace{-0.7cm}
  \caption{\label{fig:res:topology_eg} \revised{{\em Top}: the regions with topology noise are highlighted in orange; {\em Bottom}:  maps computed using our method visualized via color transfer.}}
  \vspace{-0.4cm}
\end{figure}

\subsubsection{Different Basis}
%
In~\cite{LMH} it was proposed to address the partial setting by considering a Hamiltonian $H_\M=L_\M+V_\M$ in place of the standard manifold Laplacian, where $V_\M=\mathrm{diag}(1-v)$ is a localization potential concentrated on the support of a given (soft) indicator function $v:\M\to[0,1]$; eigenfunctions of $H_\M$ are supported on $v$. 
%
%
 We performed experiments showing that spectral upsampling can still be applied {\em as-is} to improve the quality of maps, when these are represented in this alternative basis. In these tests we initialized as in~\cite{LMH}, and evaluated on the entire dataset of~\cite{cosmo2016matching}, consisting of $150$ cluttered scenes and $3$ query models (animals). The results are reported in the supplementary materials.

\begin{table}[!t]
\vspace{-0.2cm}
\caption{Results in the transfer of different classes of functions, average on $20$ pairs from FAUST
  dataset. Initial map size is $40 \times 30$ (Ini), final size of ours is  $210 \times 200$. The methods marked
  with $\dagger$ are initialized with the initial functional maps refined by ICP. See text for details.} 
\label{tab:funtiontransfer}  
\footnotesize
\begin{tabular}{|l|c|c|c|c|c|c|c|}\hline 
\textbf{\footnotesize function} & \textbf{\footnotesize Ini} & \textbf{\footnotesize ICP} & \textbf{\footnotesize p2p$^{\dagger}$} & \textbf{\footnotesize ICP$\tiny{_{200}}$} & \textbf{\footnotesize Prod$^{\dagger}$} & \textbf{\footnotesize Ours} & \textbf{\footnotesize Ours$^{\dagger}$} \\\hline 
   \footnotesize{HeatKernel}& 0.80 & 0.18 & 0.15 & 0.17 & 0.19 & \textbf{0.10} & \textbf{0.10} \\\hline 
   \footnotesize{HeatKernel$\tiny{_{200}}$ } & 0.95 & 0.84 & 0.52 & 0.34 & 0.65 & \textbf{0.29} & \textbf{0.29} \\\hline 
    \footnotesize{HKS} & 0.66 & 0.55 & 0.21 & 0.21 & 0.28 & 0.14 & \textbf{0.13} \\\hline 
    \footnotesize{WKS} & 0.51 & 0.15 & 0.06 & 0.11 & 0.13 & \textbf{0.04} & \textbf{0.04} \\\hline 
 \footnotesize{XYZ} & 0.67 & 0.13 & 0.09 & 0.12 & 0.15 & \textbf{0.05} & \textbf{0.05} \\\hline 
 \footnotesize{Indicator} & 0.77 & 0.30 & 0.18 & 0.20 & 0.26 & \textbf{0.17} & \textbf{0.17} \\\hline 
   \footnotesize{SHOT} & 0.87 & 0.82 & 0.87 & 0.74 & 0.78 & \textbf{0.73} & \textbf{0.73} \\\hline 
   \footnotesize{AWFT} & 0.45 & 0.26 & 0.18 & 0.19 & 0.24 & \textbf{0.14} & \textbf{0.14} \\\hline 
  \footnotesize{Delta} & 0.98 & 0.93 & 0.67 & 0.43 & 0.82 & \textbf{0.38} & \textbf{0.38} \\\hline 
\end{tabular} 

\end{table} 
\begin{figure}[t!]
  \centering
  \begin{overpic}
  [trim=0cm 0cm 0cm 0cm,clip,width=\linewidth]{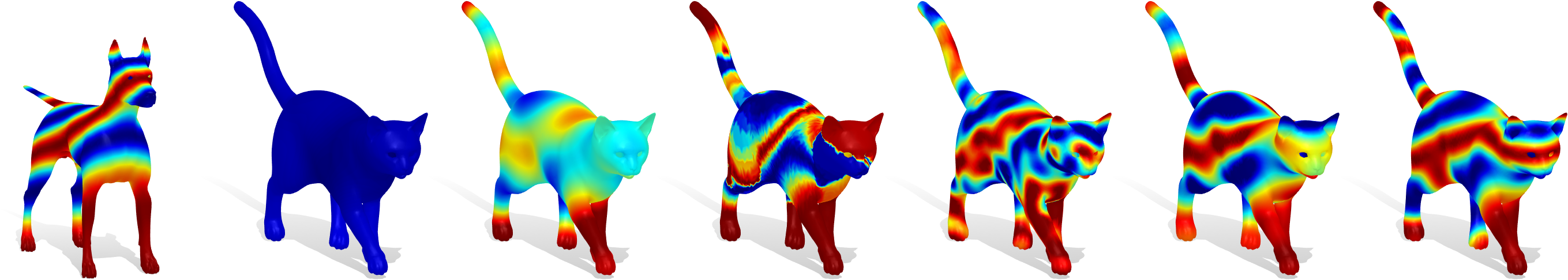}
  \put(0,16.5){\footnotesize{original $f$}}
  \put(19.8,16.5){\footnotesize{Ini}}
  \put(34.5,16.5){\footnotesize{ICP}}
  \put(48,16.5){\footnotesize{p2p$^\dagger$}}
  \put(62.5,16.5){\footnotesize{ICP$_{300}$}}
  \put(77,16.5){\footnotesize{Prod$^\dagger$}}
  \put(91.5,16.5){\textbf{\footnotesize{Ours}}}
  \end{overpic}
  \vspace{-0.6cm}
  \caption{\label{fig:dog_cat} Function transfer example on a non-isometric pair from TOSCA. We show the original function on the source shape (leftmost) and the transfer results for the different methods. The functional map is upsampled from size $40 \times 30$ to $310 \times 300$. We mark the methods initialized with ICP with $\dagger$.}
  \vspace{-0.1cm}
\end{figure}

\subsubsection{Transfer of functions} 
Functional maps can be used
to transfer functions without necessarily converting to pointwise correspondences. This application,
however, can be hindered by the fact that small functional maps can only transfer low-frequency
information. A recent approach \cite{nogneng18} has tried to lift this restriction by noting that
higher frequency functions can be transferred using ``extended'' bases consisting of pointwise
products of basis functions. Our approach is similar in spirit since it also allows to extend the
expressive power of a given functional map by increasing its size and thus enabling transfer of
higher-frequency information.

We evaluated our method by directly comparing with the state of the art~\cite{nogneng18}.  For 9
different classes of functions we compute the error as the norm of the difference between 
the transferred function and the ground truth $g$ (obtained by transferring using the ground truth
pointwise map), normalized by the norm of $g$. The functions considered are: Heat Kernel computed
with $30$ and with $200$ eigenfunctions, descriptors HKS~\cite{sun2009concise},
WKS~\cite{aubry2011wave}, SHOT \cite{shot}, AWFT \cite{AWFT}, the coordinates of the 3D embedding, binary indicator of region, and the heat kernel with a very small time parameter approximating a delta function defined around a point.  The results are reported in Table~\ref{tab:funtiontransfer}.  We
use the same parameters adopted in~\cite{nogneng18}, and average over 20
 random  FAUST pairs.  We refine the initial map (Ini) of size
$40\times 30$, computed using the approach of \cite{nogneng17}, to $210\times 200$ with a step size of 1. We also compare to ICP: ICP refinement applied to Ini;
p2p: function transfer using the point-to-point map obtained by ICP; ICP$_{200}$: ICP applied to a functional map
of dimension $210 \times 200$ estimated through the same pipeline adopted for Ini; Prod: 
the method proposed in~\cite{nogneng18}. We outperform all the competitors for all the classes.

We also compare the results obtained by our method initializing the functional map after applying ICP, and the two are almost the same everywhere.  A transfer example of a high-frequency function between a dog and a cat shapes from TOSCA is visualized in Figure~\ref{fig:dog_cat}. Our refinement
achieves the best results with respect to all the competitors even in this non-isometric pair.  In
the supplementary materials we report other qualitative comparisons.

\section{Conclusion, Limitations \& Future Work}
\label{sec:conclusion}

We introduced a simple but efficient map refinement method based on iterative spectral upsampling. We presented a large variety of quantitative and qualitative results demonstrating that our method can produce similar or better quality on a wide range of shape matching problems while typically improving the speed of the matching by an order of magnitude or more. We find it remarkable that our method has such strong performance, even though it is conceptually simple and only requires a few lines of code to implement. In many cases, our method outperforms very complex frameworks that consist of multiple non-trivial algorithmic components.

Our method still comes with multiple limitations. First, while being robust to noise, its success still depends on a reasonable initialization. Starting with a bad initialization, such as random functional maps, our method would produce poor results.
Second, the method still relies on some parameters that have to be tuned for each application. Specifically, we need to
identify the number of basis functions in the initialization and the final number of basis functions. Additionally, the
step size during upsampling has to be chosen for optimal speed, but using a step size of one is always a safe
choice. Finally, our method is very robust to deviations from perfect isometries, but still will fail for significantly
non-isometric shape pairs.  \revised{See examples in Figure~\ref{fig:eg:failure} and in Appendix~{\ref{sec:appendix:deblur}.}}
In future work, we would like to investigate how to automatically compute the minimal size of the input functional map and plan to extend our work to other settings such as general graphs and images.

\begin{figure}[!t]
\centering
  \begin{overpic}
  [trim=6cm 0cm 2cm 0cm,clip,width=1\columnwidth,grid=false]{./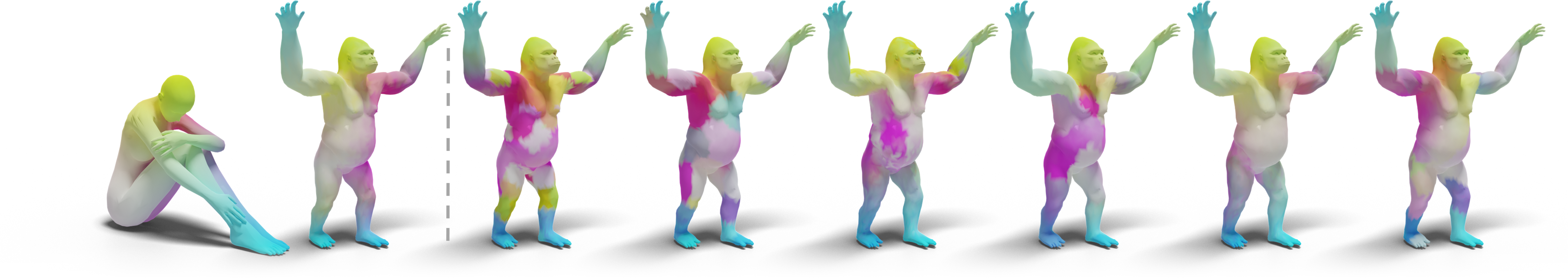}
  \put(3,19){\footnotesize{Source}}
  \put(16,19){\footnotesize{Target}}
  \put(31,19){\footnotesize{Ini}}
  \put(42,19){\footnotesize{ICP}}
  \put(54,19){\footnotesize{PMF}}
  \put(66,19){\footnotesize{RHM}}
  \put(77,19){\footnotesize{BCICP}}
  \put(91,19){\textbf{\footnotesize{Ours}}}
  \end{overpic}
  \vspace{-0.7cm}
\caption{Failure case. Here we show a challenging case where the initial map has left-to-right, back-to-front, and arm-to-leg ambiguity. When refining such a low-quality initial map, our method sometimes fails to produce a good refined map. However, our refinement still outperforms the regular ICP method with respect to the quality of the computed correspondences.}
\label{fig:eg:failure}
\vspace{-0.3cm}
\end{figure}

\begin{acks}
\revised{
The authors wish to thank the anonymous reviewers for their valuable comments and helpful suggestions, and Danielle Ezuz and Riccardo Marin for providing source code for experimental comparisons.
This work was supported by KAUST OSR Award No. CRG-2017-3426, a gift from the NVIDIA Corporation, the ERC Starting Grant StG-2017-758800 (EXPROTEA) and StG-2018-802554 (SPECGEO).
}
\end{acks}


\bibliographystyle{ACM-Reference-Format}
\bibliography{bibliography}

\appendix
\section{Theoretical Analysis}

\paragraph{Proof of Theorem \ref{thm:energy}}
We will prove this theorem with the help of the following well-known lemma, for which we give the
proof in the Supplementary Material for completeness:

\begin{lemma}
\label{lemma:diag}
Let us be given a pair of shapes $\M,\N$ each having non-repeating
Laplacian eigenvalues, which are the same. A point-to-point map
$T:\M \rightarrow \N$ is an isometry if and only if the corresponding
functional map $\C$ in the complete Laplacian basis is both diagonal
and orthonormal.
\end{lemma}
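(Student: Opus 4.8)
The plan is to prove the two implications separately, throughout exploiting the hypothesis that $\M$ and $\N$ share the same simple (non-repeating) Laplacian spectrum $\lambda_1<\lambda_2<\cdots$, so that the eigenvalue matrices coincide, $\Lambda_\M=\Lambda_\N=:\Lambda$, with pairwise distinct diagonal entries. I work in the smooth setting, identifying the functional map $\C$ with the pullback $T^*\colon g\mapsto g\circ T$ expressed in the (orthonormal) eigenbases.

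For the forward direction (isometry $\Rightarrow$ diagonal and orthonormal), I would invoke two intrinsic properties of an isometry $T$. First, since the Laplace--Beltrami operator is intrinsic, an isometry commutes with it; in the functional picture this reads $\Lambda_\M\C=\C\Lambda_\N$, i.e. $\C\Lambda=\Lambda\C$. Because $\Lambda$ is diagonal with distinct entries, any matrix commuting with it is necessarily diagonal (the commutant of a regular diagonal matrix), which forces $\C$ to be diagonal. Second, an isometry preserves the area-weighted inner product, so $T^*$ is orthogonal on $L^2$; written in the orthonormal eigenbasis this is exactly $\C^\top\C=I$, i.e. $\C$ is orthonormal. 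Combining the two yields the claim.

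For the reverse direction (diagonal and orthonormal $\Rightarrow$ isometry), which I expect to be the main obstacle, I would first observe that a diagonal orthonormal matrix has entries $\pm1$, so $\C=\mathrm{diag}(\epsilon_1,\epsilon_2,\dots)$ with $\epsilon_i\in\{-1,+1\}$. Unwinding $\C=\Phi_\M^\top A_\M\,\Pi\,\Phi_\N$, the $i$-th column shows that $\phi_i^\N\circ T=\epsilon_i\,\phi_i^\M$ for every $i$: the pulled-back eigenfunctions of $\N$ agree, up to sign, with those of $\M$. The key step is then to upgrade this spectral coincidence to a metric statement via the heat kernel. Writing $k_t^\M(x,y)=\sum_i e^{-\lambda_i t}\phi_i^\M(x)\phi_i^\M(y)$ and similarly on $\N$, the sign-squaring $\epsilon_i^2=1$ together with the matching eigenvalues gives
\[
 k_t^\N\big(T(x),T(y)\big)=\sum_i e^{-\lambda_i t}\,\phi_i^\N(T(x))\,\phi_i^\N(T(y))=\sum_i e^{-\lambda_i t}\,\phi_i^\M(x)\,\phi_i^\M(y)=k_t^\M(x,y)
\]
for all $t>0$, so $T$ preserves the heat kernel. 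Varadhan's short-time asymptotics $d(x,y)^2=-\lim_{t\to0^+}4t\log k_t(x,y)$ then recover geodesic distance from the heat kernel, whence $d_\N(T(x),T(y))=d_\M(x,y)$ and $T$ is distance-preserving.

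The genuinely nontrivial ingredient is this final passage from ``eigenfunctions coincide up to sign'' to ``geodesic distances are preserved,'' and it is exactly where simplicity of the spectrum is essential: without it, $\C$ could be merely block-orthogonal rather than diagonal, and a block structure would not force the per-eigenfunction sign relation that drives the heat-kernel identity. A secondary technical point I would address is surjectivity: distance preservation makes $T$ an isometric embedding, and equality of the full spectra (equivalently of the heat traces $\sum_i e^{-\lambda_i t}$) forces the total areas to match, ruling out a proper embedding and upgrading $T$ to a genuine isometry. I will also note briefly that the sign freedom $\epsilon_i$ is precisely what is compatible with both diagonality and orthonormality, closing the equivalence.
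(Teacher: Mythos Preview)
The paper does not actually prove this lemma in the body of the text: it calls the result ``well-known'' and defers the argument to the supplementary material, which is not reproduced here. So there is no in-paper proof to compare against line by line.

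On its own merits, your argument is correct and follows the standard route in spectral geometry. The forward direction is exactly the classical one (intrinsicness of $\Delta$ gives commutation $\C\Lambda=\Lambda\C$, simple spectrum forces diagonality, area preservation gives orthonormality). For the reverse direction, the heat-kernel trick---exploiting $\epsilon_i^2=1$ to show $k_t^{\N}\big(T(x),T(y)\big)=k_t^{\M}(x,y)$ and then reading off geodesic distance via Varadhan's formula---is the canonical tool and almost certainly what the supplementary material contains as well. Your surjectivity addendum via matching heat traces and volumes is fine; a slightly more direct observation is that orthonormality of the full $\C$ already makes $T^*$ an $L^2$-isometry, so any nonzero function supported off $T(\M)$ would pull back to zero, contradicting $\|T^*g\|=\|g\|$ and forcing $T(\M)=\N$.
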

\begin{proof}
  To prove Theorem \ref{thm:energy} first suppose that the map $T$ is
  an isometry, and thus, thanks to Lemma \ref{lemma:diag}, the
  functional map $\C = \Phi^{+}_{\M} \Pi \Phi_{\N}$ is diagonal and
  orthonormal. From this, it immediately follows that every principal
  submatrix of $\C$ must also be orthonormal implying $E(\C) = 0$.

  To prove the converse, suppose that $\C \in \mathcal{P}$. Then
  $E(\C) = 0$ implies that every principal submatrix of $\C$ is
  orthonormal. By induction on $k$ this implies that $\C$ must also be 
  diagonal. Finally since $\C \in \mathcal{P}$, again using Lemma
  \ref{lemma:diag} we obtain that the corresponding pointwise map must
  be an isometry.
\end{proof}

\subsection{Map Recovery}
Our goal is to prove that Eq. \eqref{eq:ourprob3} with the regularizer
$\mathcal{R}(\Pi) = \| (I - \Phi^{k}_{\M} (\Phi^{k}_{\M})^{+}) \Pi
\Phi^k_{\N} \C_K^T \|_{A_\M}^2$ is equivalent to solving $ \min_{\Pi} \|   \Pi
\Phi^k_{\N} \C_k^T - \Phi^{k}_{\M} \|^2_{F}$: In other words:
\begin{align}
  \notag
  &\argmin_{\Pi} \|   (\Phi^{k}_{\M})^{+} \Pi \Phi^k_{\N} \C_k^T - I_k
  \|^2_F + \| (I - \Phi^{k}_{\M} (\Phi^{k}_{\M})^{+}) \Pi \Phi^k_{\N}
    \C_K^T \|_{A_\M}^2\\
  \label{eq:ourprob_append}
  = &\argmin_{\Pi} \|   \Pi \Phi^k_{\N} \C_k^T - \Phi^{k}_{\M} \|^2_{F}.
\end{align}
For this we use the following result: for any matrix $X$ and basis $B$ that is orthonormal with
respect to a symmetric positive definite matrix $A$, i.e. $B^T A B = Id$, and thus
$B^{+} = B^T A$, if we let $\|X\|^2_A = tr(X^T A X)$ then:
$\|X\|^2_A = \|B^{+} X\|_F^2 + \| (I - B B^{+}) X\|_A^2$. To see this, observe that
$\|B^{+} X\|_F^2 = tr(X^T A B B^T A X)$ while $\| (I - B B^{+}) X\|_A^2 = tr\left(X^T  (I - A B
  B^{T}) A  (I - B B^T A) X \right) = tr\left(X^T  (A - A B B^{T} A) X \right)$ since
$B^T A B$. 
We now use this result with $X = \Pi \Phi^k_{\N} \C_k^T -
\Phi^{k}_{\M}$, $B = \Phi^k_{\M}$ and $A = A_{\M}$. This gives:
\begin{align*}
  \| \Pi \Phi^k_{\N} \C_k^T - \Phi^{k}_{\M} \|^2_{A_{\M}} =
  \| (\Phi^{k}_{\M})^{+} \left(\Pi \Phi^k_{\N} \C_k^T - \Phi^{k}_{\M}\right) \|^2_F \\
   + \| \left(I - \Phi^{k}_{\M} (\Phi^{k}_{\M})^{+}\right) \left(\Pi \Phi^k_{\N} \C_k^T - \Phi^{k}_{\M}\right) \|^2_{A_{\M}} \\
  = \| (\Phi^{k}_{\M})^{+} \Pi \Phi^k_{\N} \C_k^T - I_k \|^2_F 
  + \| \left(I - \Phi^{k}_{\M} (\Phi^{k}_{\M})^{+}\right) \Pi \Phi^k_{\N} \C_k^T \|^2_{A_{\M}}.
\end{align*}
It remains to show that $\argmin_{\Pi} \| X \|^2_{A_{\M}}  =
\argmin_{\Pi} \| X \|_F^2$ with $X = \Pi \Phi^k_{\N} \C_k^T -
\Phi^{k}_{\M}$. For this note simply that since $\Pi$ represents a
pointwise map, both problems reduce to finding the row of $\Phi^k_{\N}
\C_k^T$ that is closest to each of the rows of $\Phi^{k}_{\M}$.

Note that in supplementary material we derive both an alternative
approach to \name\ and, as mentioned in Section 4.3. provide a link
between our approach and PMF.

\section{Implementation}
\label{sec:appendix:code}
This Appendix lists standard Matlab code for our method and BCICP,
which is the most competitive method to ours while being orders of
magnitude slower. Note that a fully-working version of \name~\textbf{can be
implemented in just 5 lines of code}, while BCICP relies on the
computation of all pairs of geodesics distances on both shapes, and even
after pre-computation, is more than 250 lines of code relying on
numerous parameters and spread across a main procedure and 4 utility
functions.

\lstset{language=Matlab,%
    breaklines=true,%
    morekeywords={matlab2tikz},
    keywordstyle=\color{blue},%
    morekeywords=[2]{1}, keywordstyle=[2]{\color{black}},
    identifierstyle=\color{black},%
    stringstyle=\color{mylilas},
    commentstyle=\color{mygreen},%
    showstringspaces=false,
    numbers=left,%
    numberstyle={\tiny \color{black}},
    numbersep=9pt, 
    emph=[1]{for,end,break},emphstyle=[1]\color{red}, 
    basicstyle=\linespread{0.8},
  }

\subsection{Source Code - \name}
\lstinputlisting{./sections/code_zoomout.m}

\subsection{Source Code - BCICP \cite{ren2018continuous}}
\begin{footnotesize}
  \lstinputlisting{./sections/code_bcicp.m}
  \end{footnotesize}

\section{Additional measurements}\label{sec:appendix:measurements}

\revised{
Given 10 random shape pairs from the FAUST original dataset, Table~{\ref{tb:append:measurement}} shows the performance summary of different refinement methods w.r.t. the following measurements as used in~{\cite{ren2018continuous}}. Specifically, we evaluate the maps $T_{12}$ and $T_{21}$ between a pair of shapes $S_1$ and $S_2$:

\begin{itemize}
    \item \textbf{Accuracy}. We measure the geodesic distance between $T_{12}$ (and $T_{21}$) and the given ground-truth correspondences. 
    \item \textbf{Un-Coverage}. The percentage of vertices/areas that are NOT covered by the map $T_{12}$ (or $T_{21}$) on shape $S_2$ (or $S_1$)
    \item \textbf{Bijectivity}. The composite map $T_{21}\circ T_{12}$ (or $T_{12}\circ T_{21}$) gives a map from the shape
      to itself. Thus, we measure the geodesic distance between this composite map and the identity.
    \item \textbf{Edge distortion}. We measure how each edge in $S_1$ (or $S_2$) is distorted by the map $T_{12}$ (or $T_{21}$) as follows:
    $$ e_{v_i \sim v_j} = \Big(\frac{d_{S_2}\big(T_{12}(v_i)), T_{12}(v_j)\big)}{d_{S_1}(v_i, v_j)} - 1\Big)^2$$
    We then average the distortion error over all the edges as a measure for the map smoothness.
\end{itemize}

Note that the PMF method optimizes for a permutation matrix directly, that is why the computed maps covered all the
vertices and give almost zero bijectivity error in Table~\ref{tb:append:measurement} (the bijiectivity error is not
strictly zero because the maps $T_{12}$ and $T_{21}$ are computed independently). The method BCICP includes heuristics
to explicitly improve the coverage, bijectivity, and smoothness. Even though our method is not designed to optimize
these measurements, it still achieves reasonable performance. Note that our method gives the smallest edge distortion, which suggests that our method not only gives the most accurate map but also the smoothest map w.r.t. all the competing methods.
}

\setlength{\tabcolsep}{0.25em}
\begin{table}[!t]
\caption{
\revised{
\textbf{Additional measurements.} Besides the map accuray, we also measure the coverage, bijectivity, and edge
distortion as a smoothness measure on 10 random shape pairs from the original FAUST dataset.
}
}
\vspace{-0.2cm}
\label{tb:append:measurement}
\centering
\footnotesize
\begin{tabular}{|c|c|cccccc|}
\hline
\multirow{2}{*}{Measurement\textbackslash~Method} & \multirow{2}{*}{Ini} & \multicolumn{6}{c|}{Refinement methods} \\ \cline{3-8} 
 &  & ICP & PMF (gauss) & RHM & BCICP & \textbf{ours} & \textbf{ours$^*$} \\ \hline
\textbf{Accuracy} ($\times 10^{-3}$) & 98.4 & 85.8 & 36.3 & 63.9 & 49.9 & \textbf{33.3} & 36.8 \\
\textbf{Un-Coverage} (\%) & 72.3 & 42.4 & \textbf{0} & 44.5 & 15.9 & 23.6 & 30.7 \\
\textbf{Bijectivity} ($\times 10^{-3}$) & 104 & 89.6 & \textbf{1.90} & 24.6 & 5.48 & 15.6 & 14.8 \\
\textbf{Edge distortion} & 10.9 & 26.4 & 37.3 & 3.69 & 5.49 & \textbf{1.16} & 4.37 \\ \hline
\end{tabular}
\end{table}

\section{Comparison to Deblur}\label{sec:appendix:deblur}

\begin{figure}[h]
\vspace{-0.1cm}
  \centering
  \begin{overpic}
  [trim=0cm 0cm 0cm 0cm,clip,width=0.95\linewidth, grid=false]{./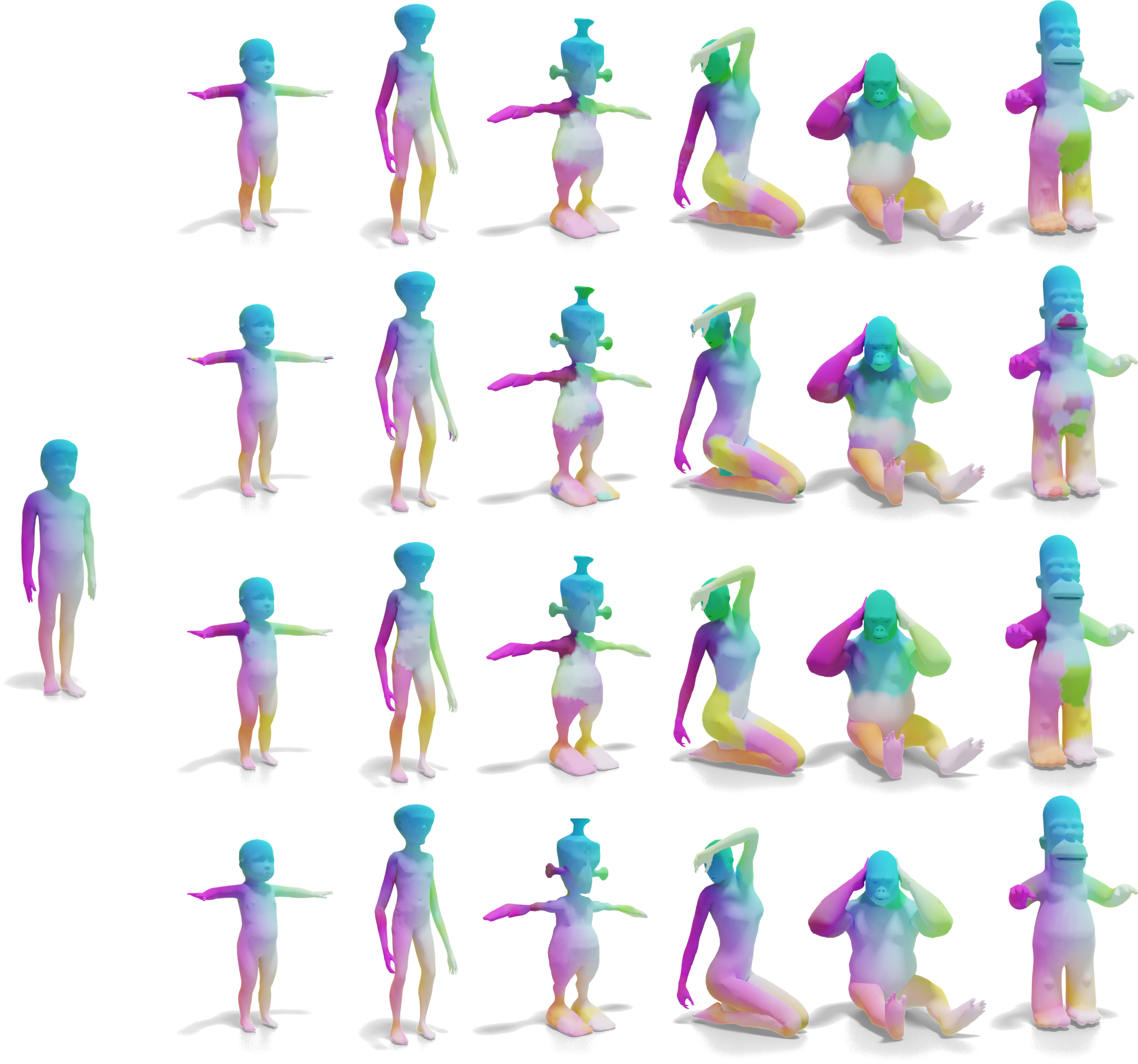}
  \put(-1,56){ Source}
  \put(100,92){ \rotatebox{-90}{Initialization}}
  \put(100,63){ \rotatebox{-90}{ICP}}
  \put(101,40){\rotatebox{-90}{Deblur}}
  \put(100,16){ \rotatebox{-90}{\textbf{Ours}}}
  \end{overpic}
  \caption{\revised{Comparison to~{\cite{ezuz2017deblurring}} on non-isometric shape pairs.} }
  \label{fig:append:deblur}
  \vspace{-2mm}
\end{figure}

\revised{ {The work of \cite{ezuz2017deblurring}} also provides an approach for recovering a
  point-wise map from a functional map, based on a different energy. Specifically, our energy
  defined in Eq.~\eqref{eq:ourprob} and the resulting point-wise map conversion step defined in
  Eq.~\eqref{eq:ourprob3} are different from the deblurring energy defined in Eq. (4)
  in~{\cite{ezuz2017deblurring}}. Figure~\ref{fig:append:deblur} shows a qualitative comparison
  between our method and this method, which we call ``Deblur.'' We use 5 landmarks to compute the
  initial maps (first row), and we then apply ICP, Deblur, and ours to refine the initial maps. Note
  that in these examples, we rescale the target shapes to the same surface area as the source 
  shape. We can see that even when the shape pair is far from isometry, our method can still produce
  reasonable maps, even though our theory relies on the isometry assumption.  }

\end{document}